\documentclass[journal]{IEEEtran}
\usepackage{cite}
\usepackage{amsmath,amssymb,amsfonts,amsthm}
\usepackage{algorithmic}
\usepackage{graphicx}
\usepackage{textcomp}
\usepackage{xcolor}
\usepackage[caption=false,font=footnotesize]{subfig}

\usepackage{siunitx}

\usepackage{enumitem}

\usepackage{hyperref} 
\hypersetup{hidelinks}
\usepackage{xurl}

\definecolor{tab0}{RGB}{31, 119, 180}    
\definecolor{tab1}{RGB}{255, 127, 14}    
\definecolor{tab2}{RGB}{44, 160, 44}     
\definecolor{tab3}{RGB}{214, 39, 40}     
\definecolor{tab4}{RGB}{148, 103, 189}   
\definecolor{tab5}{RGB}{140, 86, 75}     
\definecolor{tab6}{RGB}{227, 119, 194}   
\definecolor{tab7}{RGB}{127, 127, 127}   
\definecolor{tab8}{RGB}{188, 189, 34}    
\definecolor{tab9}{RGB}{23, 190, 207}    

\usepackage{booktabs}
\usepackage{mathtools}
\usepackage{amssymb}

\newtheorem{theorem}{Theorem}
\newtheorem{lemma}{Lemma}
\newtheorem{corollary}{Corollary}
\newtheorem{proposition}{Proposition}
\newtheorem{definition}{Definition}
\newtheorem{example}{Example}

\newtheorem{assumption}{Assumption}

\newcommand{\nb}{\mathbb{N}}

\newcommand{\fb}{\mathbb{F}}

\usepackage{dsfont}

\usepackage{comment}

\DeclareMathOperator{\argmax}{argmax}
\DeclareMathOperator{\argmin}{argmin}
\DeclareMathOperator{\rank}{rank}

\DeclareMathOperator{\wt}{w_H}

\usepackage{bm}

\usepackage{acro}

\DeclareAcronym{TPC}{
  short = TPC,
  long = turbo product code,
  short-plural = s
}
\DeclareAcronym{RLC}{
  short = RLC,
  long = random linear code,
  short-plural = s
}
\DeclareAcronym{MDS}{
  short = MDS,
  long = maximum distance separable,
  short-plural = s
}
\DeclareAcronym{CRC}{
  short = CRC,
  long = cyclic redundancy check,
  short-plural = s
}
\DeclareAcronym{SCL}{
  short = SCL,
  long = successive cancellation list,
  short-plural = s
}
\DeclareAcronym{BCH}{
  short = BCH,
  long = Bose–Chaudhuri–Hocquenghem,
  short-plural = s
}
\DeclareAcronym{BCJR}{
  short = BCJR,
  long = Bahl-Cocke-Jelinek-Raviv,
  short-plural = s
}
\DeclareAcronym{BLER}{
  short = BLER,
  long = block error rate,
  short-plural = s
}
\DeclareAcronym{BI-AWGN}{
  short = BI-AWGN,
  long = binary-input additive white Gaussian noise,
  short-plural = s
}
\DeclareAcronym{SNR}{
  short = SNR,
  long = signal-to-noise ratio,
  short-plural = s
}
\DeclareAcronym{LLR}{
  short = LLR,
  long = log likelihood ratio,
  short-plural = s
}
\DeclareAcronym{SISO}{
  short = SISO,
  long = soft-input soft-output,
  short-plural = s
}

\DeclareAcronym{GRAND}{
  short = GRAND,
  long = Guessing Random Additive Noise Decoding,
  short-plural = s
}

\DeclareAcronym{SOGRAND}{
  short = SOGRAND,
  long = Soft-Output Guessing Random Additive Noise Decoding,
  short-plural = s
}

\DeclareAcronym{SO}{
  short = SO,
  long = soft-output,
  short-plural = s
}

\DeclareAcronym{LDPC}{
  short = LDPC,
  long = low-density parity-check,
  short-plural = s
}
\DeclareAcronym{LDP}{
  short = LDP,
  long = Large Deviations Principle,
  short-plural = s
}

\DeclareAcronym{VN}{
  short = VN,
  long = variable node,
  short-plural = s
}

\DeclareAcronym{CN}{
  short = CN,
  long = check node,
  short-plural = s
}

\DeclareAcronym{RS}{
  short = RS,
  long = Reed-Solomon,
  short-plural = s
}

\DeclareAcronym{ML}{
  short = ML,
  long = Maximum Likelihood,
  short-plural = s
}

\DeclareAcronym{BSC}{
  short = BSC,
  long =Binary Symmetric Channel,
  short-plural = s
}

\DeclareAcronym{DMC}{
  short = DMC,
  long =Discrete Memoryless Channel,
  short-plural = s
}

\usepackage{printlen}
\uselengthunit{mm}

\def\BibTeX{{\rm B\kern-.05em{\sc i\kern-.025em b}\kern-.08em
    T\kern-.1667em\lower.7ex\hbox{E}\kern-.125emX}}
\begin{document}

\title{Conditional Correctness in List Decoding: \\ How a Late Second Codeword Can Rescue Confidence in the First}

\author{Conrad Struss, Muriel Médard, and Ken R. Duffy%
\thanks{This work was supported by the Defense Advanced Research Projects Agency (DARPA) under Grant HR00112120008.}%
\thanks{Conrad Struss is with the Department of Mathematics, Northeastern
University, Boston, MA 02115 USA (e-mail: struss.c@northeastern.edu).}%
\thanks{Muriel Médard is with the Research Laboratory of Electronics,
Massachusetts Institute of Technology, Cambridge, MA 02139 USA
(e-mail: medard@mit.edu).}%
\thanks{Ken R. Duffy is with the Department of Mathematics and the Department
of Electrical and Computer Engineering, Northeastern University, Boston,
MA 02115 USA (e-mail: k.duffy@northeastern.edu).}}

\maketitle

\begin{abstract}
In error correction, the difference between a codeword and a received word, a candidate noise sequence, carries information on the confidence of the codeword as a proposed decoding. For additive channels, it is known that if noise sequences are ranked in decreasing order of likelihood, a lower rank correlates with higher confidence in a decoding.  
Here we substantially expand considerations by exploring the relationship between the likelihood ranks arising in list decoding and the posterior probability of decoding correctness, establishing that the ranks associated with subsequent list entries carry reliability information that is not obtainable from the first rank alone. 

Building on the recent development of Soft-Output Guessing Random Additive Noise Decoding, we first show that, under a random codebook model, soft-output expressions
for list decoding coincide with the exact finite-blocklength conditional probabilities of decoding correctness. For hard-decision maximum likelihood decoding, we characterize the asymptotic reliability of the first list entry. 
We identify a decision function whose sign determines whether the posterior correctness probability of the first entry in the list converges to one or zero and whose absolute value gives the exponential rate of convergence. Perhaps surprisingly, the asymptotic reliability of the first list entry depends only on the first, second and last ranks. 

For list size one, the result recovers the previously known likelihood-rank threshold for reliable decoding. For list size two, a sufficiently late second entry can imply that the first entry is asymptotically correct even when its rank, viewed in isolation, would indicate otherwise.

\end{abstract}

\begin{IEEEkeywords}
List Decoding, Soft Output, GRAND, SOGRAND, Guesswork, Large Deviations.
\end{IEEEkeywords}

\acresetall
\section{Introduction}
Consider the canonical hard-decision forward error correction scenario in which a receiver obtains a binary word consisting of $n$ bits, $Y^n\in \fb_2^n$, that is a potentially channel-corrupted version of the transmitted codeword, $X^n \in \mathcal{C}_n \subset \fb_2^n$. Let the channel noise $N^n = Y^n - X^n \in \fb_2^n$ be independent of the transmitted word $X^n$, which is chosen uniformly at random from the codebook $\mathcal{C}_n$, and assume $\mathcal{C}_n$ contains $|\mathcal{C}_n| = 2^k$ elements. Suppose a decoder $d  \colon \fb_2^n \to \mathcal{C}_n $ returns the  codeword $ \hat{X}^n \coloneqq d(Y^n)  \in \mathcal{C}_n$.

The probability of a decoding error $P(X^n \neq \hat{X}^n)$, also known as the \ac{BLER}, has served as the standard measure of decoding performance since Shannon's pioneering work \cite{shannon1948}. It is the central object in the study of error exponents \cite{Gallager1965}, success exponents for codes operating above capacity  \cite{Arimoto1973,Dueck1979}, and in the finite blocklength regime \cite{polyanski2010finiteblocklength}. 

A less extensively studied quantity is the posterior probability that the returned codeword is the transmitted one given the received word, which is called the blockwise \ac{SO}. For codeword $ c^{n,\ast} \in \mathcal{C}_n$ and received word $y^n \in \fb_2^n$, it is defined as
\begin{align}\label{eq: definition blockwise so}
   \underbrace{ P(X^n = c^{n,\ast} \;|\; Y^n = y^n )}_{\eqqcolon P_{X^n | Y^n}(c^{n,\ast} | y^n) }  =  \frac{p_{N^n}(y^n - c^{n,\ast})}{ \displaystyle\sum_{c^n \in \mathcal{C}_n} p_{N^n}(y^n - c^n)},
\end{align}
where $p_{N^n}$ denotes the probability mass function of the binary noise $N^n$. 
In seminal work, Forney \cite{forney1968_exponential} used eq. \eqref{eq: definition blockwise so} to describe optimal decision regions for erasure and list decoding. 

The relationship between \ac{BLER} and blockwise \ac{SO} is given by
\begin{align}
    P( X^n \neq \hat{X}^n) &=1 - \sum_{y^n \in \fb_2^n} P_{X^n | Y^n}( d(y^n) |  y^n) P_{Y^n}( y^n) \nonumber\\
    &=  1 - \mathbb{E}  \left[  P_{X^n | Y^n}(\hat{X}^n | Y^n) \right] \label{eq: BLER = 1- E[SO]},
\end{align}
which makes it clear that any decoder that, for a given $y^n$, returns the codeword $c^{n,\ast}$ that maximizes eq. \eqref{eq: definition blockwise so} achieves minimal \ac{BLER}. Under the assumption of equiprobable codewords, finding the codeword that maximizes eq. \eqref{eq: definition blockwise so} is equivalent to finding 
$\argmax_{x^n \in \mathcal{C}_n} P( Y^n = y^n \;|\; X^n =  x^n )$, showing that \ac{ML} decoding minimizes the \ac{BLER} \cite[Sec. 5.2]{GallagerBook1968}.

As presented in eq. \eqref{eq: definition blockwise so}, the evaluation of blockwise \ac{SO} appears to require summation of a probabilistic calculation for all elements of the codebook, which is impractical even for moderate $k$.
Therefore, Forney \cite{forney1968_exponential} proposed the following approximation. Consider list decoding, where the decoder returns a list $\mathcal{L}= \{ \hat{c}^{n,1}, \dots, \hat{c}^{n,L}\}$ of $L$ candidate codewords. If noise sequences associated with codewords not in the list are assumed to have negligible probability, 
\begin{align}\label{eq: p not in list = 0}
    \sum_{c^n \in \mathcal{C}_n \setminus \mathcal{L}} p_{N^n}(y^n - c^n) \approx 0,
\end{align}
then a reasonable approximation of eq. \eqref{eq: definition blockwise so} is to replace the sum over the codebook with one only over the list entries. In effect, this approximation is the conditional likelihood that a codeword is the transmitted one given the transmitted one is in the list. For the condition in eq. \eqref{eq: p not in list = 0} to be appropriate, however, the list size must increase as channel conditions disimprove.
 
Recently, \ac{SOGRAND} \cite{yuanSOGRAND25} introduced a new means of analysing blockwise \ac{SO}, which led to an improved approximation to eq. \eqref{eq: definition blockwise so}. Instead of neglecting the probability mass outside the list, \ac{SOGRAND} estimates the likelihood that the transmitted codeword is not in the list. By leveraging that development and significantly expanding on it, here we generate new results and Shannon Theory related to blockwise \ac{SO} for list decoding.
 
To be precise, let $G \colon \fb_2^n \rightarrow  \{1, \dots, 2^n \} $  be a bijective function assigning to each noise  sequence $z^n \in \fb_2^n$ a rank $G(z^n ) \in \{1, \dots, 2^n \}$.
We define a \ac{ML} list decoder as follows. Let the first codeword in the list be
    \begin{align*}
      \widehat{C}^{n,1} = \argmin_{C^n \in \mathcal{C}_n} G(Y^n - C^n)
    \end{align*}
and then recursively
\begin{align*}
  \widehat{C}^{n,j} = \argmin_{C^n \in \mathcal{C}_n \setminus \{  \widehat{C}^{n,1}, \dots , \widehat{C}^{n,j-1} \}} G(Y^n - C^n)
\end{align*}
for $j  \in \{2, \dots, L\}$.

This class of decoders can be realized by \ac{GRAND}~\cite{duffy_capacity-achieving_2019}, 
which works by subtracting noise sequences  from $Y^n$ in increasing order of their $G$-rank, returning the first resulting sequence that belongs to the codebook as the decoded codeword. Continuing the search provides the list. 
If $G$ ranks noise sequences in decreasing order of likelihood, breaking ties arbitrarily, then eq. \eqref{eq: definition blockwise so} shows that this decoder corresponds to \ac{ML} decoding.

Define the ranks of the candidate noise sequences corresponding to the list entries as
\begin{align*}
    Q_j
    \coloneqq
    G(Y^n-\widehat C^{n,j}),
    \text{ for } j=1,\ldots,L,
\end{align*}
and let $z^{n,1} \coloneqq  G^{-1}(1), \dots, z^{n,2^n} \coloneqq  G^{-1}(2^n) \in \fb_2^n$ be the noise sequences in increasing order of rank. From the viewpoint of \ac{GRAND}, the likelihood ranks $Q_j$ correspond to the query indices at which codeword $j$ is found. 

Suppose we observe candidate noise sequences with ranks $Q_1 = q_1, \dots, Q_L = q_L$, then the noise sequences are $z^{n,q_j} = y^n - \widehat c^{n,j}$, for $j = 1, \dots, L$, and 
 the \ac{SOGRAND} approximation of eq. \eqref{eq: definition blockwise so}, cf. \cite[Corollary 1]{yuanSOGRAND25}, is given by 
\begin{align}\label{eq: SOGRAND formula introduction}
    \frac{ p_{N^n}(z^{n,q_j})}{\sum_{i=1}^{L}p_{N^n}(z^{n,q_i})+ \frac{2^k - L}{2^n - q_L}\left(1-\sum_{i=1}^{q_L}p_{N^n}(z^{n,i})\right)}.
\end{align}
Comparing eq. \eqref{eq: definition blockwise so} and eq. \eqref{eq: SOGRAND formula introduction}, we see that the terms in the denominators  agree for $c^n \in \mathcal{L}$. 
The power of the \ac{SOGRAND} formula lies in providing an estimate for $ \sum_{c^n \in \mathcal{C}_n \setminus  \mathcal{L}} p_{N^n}(y^n - c^n)$, the probability mass of the candidate noise sequences corresponding to codewords not in the list. In particular, one knows that by construction of the decoder, the noise sequences  $z^{n,i}$, $i \notin \{q_1, \dots, q_L\}$, $i \leq q_L$, cannot correspond to codewords. Therefore, the remaining probability mass $1-\sum_{i=1}^{q_L}p_{N^n}(z^{n,i})$ is distributed among the $2^n - q_L$ remaining ranks. If one approximates by a uniform distribution, then the candidate noise sequences corresponding to the codewords not in the list are each assigned probability $ p_{N^n}(y^n - c^n) \approx \frac{1}{2^n - q_L}\left(1-\sum_{i=1}^{q_L}p_{N^n}(z^{n,i})\right)$. As $ | \mathcal{C}_n \setminus \mathcal{L}| = 2^k - L$, this yields eq. \eqref{eq: SOGRAND formula introduction}. An alternative derivation using a geometric approximation, which assumes $2^{n-k}$ is small, can be found in \cite{yuanSOGRAND25}.

In this work, we first show in Proposition \ref{prop: exact SO proposition} that under a random codebook model, eq. \eqref{eq: SOGRAND formula introduction} is, in fact, not an approximation at all, but exactly coincides with
\begin{align}\label{eq: SO definition introduction}
    P(X^n =  \widehat{C}^{n,j} \;|\; Q_1 = q_1, \dots, Q_L = q_L),
\end{align}
the conditional probability of list entry $\widehat{C}^{n,j}$, $j = 1, \dots, L$, being the transmitted codeword, conditioned on the likelihood ranks of the candidate noise sequences corresponding to the list entries.  Being a conditional probability, eq. \eqref{eq: SOGRAND formula introduction} is thus the minimum mean square error estimate of decoding correctness among all functions of $q_1, \dots, q_L$ \cite[§9.4]{Williams_1991}.
As $G$ is allowed to be any bijection, this is true even for approximations of \ac{ML} decoding, in both hard and soft decision settings.

As eq. \eqref{eq: SOGRAND formula introduction} depends only on the channel $p_{N^n}$, the ranking function $G$, and the likelihood ranks $q_i$, it is amenable to study in the large-blocklength limit by leveraging understanding of guesswork \cite{Massey_guesswork1994,Arikan1996,Malone2004,Pfister2004,Hanawal_guesswork2011,christiansen2013_guesswork}. By the equality of eqs. \eqref{eq: SOGRAND formula introduction} and \eqref{eq: SO definition introduction}, the resulting statements concern the exact asymptotic conditional reliability, instead of approximations thereof.

For an \ac{ML} decoder it follows from eq. \eqref{eq: definition blockwise so} that a low rank indicates high reliability. For a list-size-one \ac{ML} decoder, this intuition was first made rigorous in \cite[Theorem~3]{duffy_capacity-achieving_2019}, where an asymptotic threshold was found, such that if $1/n \log Q_1$, the normalized logarithmic rank of the noise sequence corresponding to the found codeword, lies below it, the likelihood that the decoder returns the transmitted codeword converges to one. The result holds even for code rates above capacity.

A natural line of inquiry is whether, for $L>1$,  the \ac{ML}-decoded list provides additional reliability information in the form of the later ranks
$Q_2,\ldots,Q_L$ that the first rank alone cannot give. 
The central question studied in this work is how the relative positions of these ranks influence both the finite-blocklength and asymptotic reliability.

The query-centric viewpoint given by \ac{GRAND} provides a simple intuition for why the ranks of the later list entries influence the reliability of the earlier ones.
Suppose that we are interested in the reliability of the first list entry and that $Q_1$ is large, which, when considered alone, would suggest low reliability based on eq. \eqref{eq: definition blockwise so}. If the first found codeword is not the transmitted one, then the transmitted codeword must be found at query $Q_2$ or later, and no noise sequence with rank between $Q_1$ and $Q_2$ can correspond to the true channel noise.  As \ac{GRAND} queries noise sequences in decreasing order of probability, increasing $Q_2$, while holding $Q_1$ fixed, forces the true channel noise to be further into the low-probability region of noise sequences with ranks at least $Q_2$. Consequently, the probability of the first codeword being incorrect decreases as $Q_2$ increases.
Thus, a late second list entry can provide evidence that the first list entry is correct, even when $Q_1$ alone would suggest otherwise.

To make this intuition precise, we consider exponentially growing likelihood ranks, $q_j = 2^{\lfloor ny_j \rfloor}$, for $y_j \in [0,1)$ and $j = 1, \dots, L$. In Section \ref{sec: Shannon Theory}, we use large-deviations results for the noise guesswork $G(N^n)$ \cite{christiansen2013_guesswork} to identify an asymptotic decision function $\Delta$ that, given the law of the noise $N^n$ and code rate $R$, depends only on the normalized logarithms of the likelihood ranks $y_1, \dots, y_L$, and whose sign determines whether the posterior probability of the first codeword in the list being the transmitted codeword converges to $0$ or $1$. In a perhaps surprising development, Proposition \ref{prop: main proposition} establishes that the decision function depends only on $y_1,y_2$ and $y_L$, the first, second and last likelihood ranks.

For $L=1$, we recover the known threshold $y^*$ \cite[Theorem 3]{duffy_capacity-achieving_2019}, below which concentration on a correct decoding is guaranteed, and above which concentration on an incorrect decoding is guaranteed.
For $L=2$, we show that the second rank provides reliability information unavailable from the first rank alone. We identify a second threshold $\bar{y}>y^*$ such that, when $y^* <y_1 <\bar{y}$, a sufficiently high second rank implies concentration on a correct decoding, thereby showing that list decoding with list size $L\geq 2$ carries strictly more reliability information than single-codeword decoding. 

For the \ac{BSC}, we give explicit formulas for the thresholds $y^*$ and $\bar{y}$.
We show that such a rescue requires the second entry to have a rank exceeding the typical rank of the noise, $2^{nH}$, where $H$ is the Shannon entropy rate of the noise. Above $\bar{y}$, the decoding is asymptotically unreliable regardless of the second rank.
An illustration of the phenomenon for the \ac{BSC}, cf. Example \ref{ex: BSC finite n}, is given in Fig. \ref{fig:phase_diagram}.

\begin{figure}[t]
    \centering
    \includegraphics[width=1\linewidth]{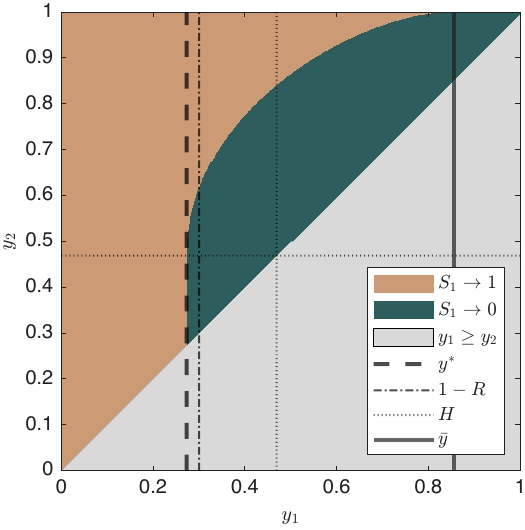}
    \caption{Visualization of Corollary \ref{cor: L=2 corollary}. Asymptotic confidence regions as code length $n \to \infty$ for the \ac{BSC} are shown in the above-capacity regime with $p = 0.1$ and $R = 0.7 > C = 1 - H = 0.531$. The orange and teal
regions indicate concentration on correct and incorrect decodings, respectively, i.e., $S_1 \coloneqq   P(X^n =  \widehat{C}^{n,1} \;|\; Q_1 = 2^{\lfloor n y_1\rfloor}, Q_2 = 2^{\lfloor n y_2\rfloor})\to 1$ and $S_1 \to 0$. The light gray region represents inadmissible ranks with $ y_1 \geq y_2$. 
The $y^* \approx 0.272$ threshold, which determines asymptotic correctness in the $L=1$ case, as well as the new threshold $\bar{y} \approx 0.854$, $H$ and $1-R$, are marked. Although $y^* <y_1$ implies asymptotic incorrectness in the $L=1$ case, in the $L=2$ case with $y^* <y_1 < \bar{y}$, asymptotic correctness can be regained if the second codeword has a sufficiently high rank. But if the first codeword has a sufficiently high rank, $\bar{y} <y_1$, asymptotic correctness cannot be regained.}
    \label{fig:phase_diagram}
\end{figure}

The rest of this paper is structured as follows.
Section \ref{sec: Background} collects background material on list decoding, guesswork, and \ac{GRAND}.
In Section \ref{sec: Soft Output exact}, we formally introduce our notation and modeling assumptions.  Proposition \ref{prop: exact SO proposition} then establishes that, under a random codebook model, and for any bijection $G$, the \ac{SOGRAND} formula eq. \eqref{eq: SOGRAND formula introduction} equals eq. \eqref{eq: SO definition introduction}, the conditional probability of correctness.
In Section \ref{sec: Shannon Theory}, we return to \ac{ML} decoding and carry out the large-blocklength analysis. Proposition \ref{prop: main proposition} establishes the decision function $\Delta$, and Corollaries \ref{cor: L=1 threshold y*} and \ref{cor: L=2 corollary} identify the thresholds $y^*$ and $\bar{y}$. We thereby show that list decoding contains strictly more reliability information than single-codeword decoding.
In Section \ref{sec: Binary Linear Codes} we exhibit numerically that for binary linear codes, the \ac{SOGRAND} formula eq. \eqref{eq: SOGRAND formula introduction} remains an accurate predictor of the exact conditional reliability eq. \eqref{eq: SO definition introduction}, and the phenomenon described in Section \ref{sec: Shannon Theory}, in which a high second noise rank increases the reliability of the first list entry, remains observable.
We conclude in Section \ref{sec: Conclusion}.

Throughout, we consider the field $\fb_2$ for ease of exposition, but note that the results generalize to higher-order finite fields.

\section{Background}\label{sec: Background}
Since its introduction by  Elias and Wozencraft \cite{elias1957_list,Wozencraft1958ListDecoding}, list decoding has been a problem of continued study. While we consider list decoding with a fixed list size $L$, in combinatorial list decoding, the term typically refers to the task of finding all codewords within a prescribed Hamming distance from the received word \cite{Guruswami2007}. Building on the list decoding algorithm for \ac{RS} codes by Guruswami and Sudan \cite{Guruswami1999}, which is confined to the hard-decision setting, Kötter and Vardy \cite{Koetter2003} introduced a soft-decision list decoder for \ac{RS} codes. List decoding is also central to the \ac{SCL} decoding of polar codes \cite{tal2015_list,niu2012crc}, where in the \ac{CRC}-aided variant, an outer \ac{CRC} code is used to select the transmitted codeword from the list obtained by the \ac{SCL} decoder. Polar codes concatenated with a \ac{CRC} have been adopted for the control channels of the 5G New Radio standard \cite{3gpp.38.212,Bioglio2021}.

\ac{GRAND} was introduced in \cite{duffy_capacity-achieving_2019} as a means of studying a code-agnostic \ac{ML} decoder. A variety of practical \ac{GRAND} algorithms have been proposed \cite{duffy2026book}, including adaptations for
soft-input decoding, e.g., \cite{Duffy19a,solomon20} and approximations of soft-input \ac{ML} decoding to allow for more efficient implementation, e.g. \cite{duffy2022_ordered,abbas2021list,abbas2025improved,yuan2026discretized}. Algorithmic advances have been 
accompanied by theoretical work ranging from performance analysis and capacity results to finite-blocklength analysis and error
exponents, e.g. \cite{liu2022orbgrand,mariona2024finite,li2024orbgrand,joudeh2024guessing,tan2025,miyamoto2026}, as well as 
implementations in circuits, e.g.
\cite{riaz2021multicodegrand,abbas2021orbgrand,condo2021fixed, chu2023efficient,blanc2024GRANDABDecoder,ji2025efficient,Riaz25ORBGRAND,Kizilates25ORBRGANDAI,Kizilates26SOGRAND}. Recently, a new channel model suitable for constructing error-correcting codes in a soft-input setting has been introduced \cite{Mariona2026}.

Central to our work is \ac{SOGRAND} \cite{yuanSOGRAND25}, which, as described in the introduction, provides blockwise \ac{SO} in the form of estimates of the a posteriori probability that the decoding is correct. The methodology behind the \ac{SOGRAND} estimate, 
inferring the likelihood of codewords not yet found from the total probability mass of the space not yet explored, has since been extended to guessing codeword decoding \cite{duffy2025soft}, successive cancellation list decoding \cite{yuan2025soft}, and bounded-distance-based list decoders \cite{Janz25soft}.

In general, applications of blockwise \ac{SO} include selecting the most reliable codeword from a  decoded list \cite{abbas2021listgrand,rapp2025reduction} and informing retransmission requests \cite{raghavan1998rova, fricke2009reliability,williamson2015variable}.  Blockwise \ac{SO} can be converted to bitwise \ac{SO} \cite{yuanSOGRAND25}, the form of soft information used to decode long, high-redundancy codes \cite{pyndiah_1998} or \ac{LDPC} codes \cite{gallagher1962_lowdensity}.

As \ac{GRAND} operates by querying sequences of noise effects in an order induced by a ranking function $G$, and the true noise is found at query $G(N^n)$, it is an instance of Massey's guesswork \cite{Massey_guesswork1994}. In \cite{Massey_guesswork1994}, Massey showed that a simple function of Shannon entropy $H$ is a lower bound for the expected number of guesses, but that ultimately, $H$ is not the correct quantity to describe the average guesswork behavior, by showing that no upper bound in terms of the Shannon entropy can exist. 

Arikan \cite{Arikan1996} established that in an asymptotic setting, it is instead the  Rényi entropy rate that characterizes the moments of guesswork. Subsequently, Arikan's work was extended from the i.i.d. case to a large class of random processes \cite{Malone2004, Pfister2004, Hanawal_guesswork2011}, which culminated in the establishment of a \ac{LDP} for guesswork \cite{christiansen2013_guesswork}. For an introduction to \acp{LDP} we refer to \cite{dembo2009large}.
Unlike the moment results, the \ac{LDP} is a distributional statement about the sequence $\{1/n \log G(N^n)\}_{n \in \mathbb{N}}$, characterizing the exponential decay rate of its rare events. This property drives our large-blocklength analysis in Section \ref{sec: Shannon Theory}. For memoryless channels, the guesswork rate function admits an explicit characterization in terms of tilted distributions \cite{Beirami2019}, which we use to derive closed-form expressions  for the \ac{BSC}.

\section{Soft Output for List Decoding}\label{sec: Soft Output exact}
The main result of this section is Proposition \ref{prop: exact SO proposition}. It codifies that, in general, the blockwise \ac{SO} formula proposed in \cite{yuanSOGRAND25} is precise for random codebooks. This equality will allow us to consider finite-$n$ results as well as large-$n$ Shannon Theory.

In the following, for a natural number $n \in \nb$, we use the notation $[n] \coloneqq \{1, \dots, n\}$ and assume that all $\log$'s are base $2$. 
We make the following modeling assumptions. Let $n,k \in \mathbb{N}$ with $k \leq n$ and
let $\fb_2^n$ be the $n$-dimensional vector space over the field with $2$ elements. We assume that the codebook $\mathcal{C}_n \coloneqq ( C^{n,1}, \dots, C^{n,2^k})$ is sampled from $\fb^n_2$ uniformly without replacement, and ordered. To be precise, if $C = ( c^{n,1}, \dots, c^{n,2^k}) \in (\fb_2^n)^{2^k}$ is any $2^k$ element ordered tuple from $\fb_2^n$, where the entries of $C$ are all distinct, then 
\begin{align*}
    P(\mathcal{C}_n = C) = \frac{1}{(2^k)! \binom{2^n}{2^k}}.
\end{align*}

We assume that the message $W$ is a uniformly distributed random variable on $[2^k]$, and that codeword $X^n \coloneqq C^{n,W}$ is sent over the channel. Let the channel act additively, meaning that the noise $N^n$ is a random variable on $\fb_2^n$ and the channel output is $Y^n = X^n + N^n.$ We also assume that $N^n,W$ and $\mathcal{C}_n$ are mutually independent.
Let $ G \colon \fb_2^n \to [2^n]$ be any bijection, and denote 
   \begin{align*}
z^{n,i} \coloneqq  G^{-1}(i)  \text{ for } i \in [2^n].
   \end{align*}  
   Then we define our list decoder as follows.
\begin{definition}\label{defi: decoder}
   Let $L \in [2^k]$ be the list size. Then define 
    \begin{align*}
      \widehat{C}^{n,1} = \argmin_{C^n \in \mathcal{C}_n} G(Y^n - C^n)
    \end{align*}
and then recursively
\begin{align*}
  \widehat{C}^{n,j} = \argmin_{C^n \in \mathcal{C}_n \setminus \{  \widehat{C}^{n,1}, \dots , \widehat{C}^{n,j-1} \}} G(Y^n - C^n)
\end{align*}
for $j  \in \{2, \dots, L\}$.
\end{definition}
Throughout this paper, we will work with the decoder defined in Definition \ref{defi: decoder}. If $G$ ranks noise sequences in decreasing order of likelihood, it is clear by definition that $\widehat{C}^{n,1}$ is the \ac{ML} decoding, which is the setting in which \ac{GRAND} was originally introduced \cite{duffy_capacity-achieving_2019}. However, we will make this assumption only in Section \ref{sec: Shannon Theory}, and for now, $G$ can be any bijection, allowing for approximations of \ac{ML} decoding, as have been proposed in practice in the soft-decision setting \cite{duffy2022_ordered,yuan2023guessing,Duffy23ORBGRANDAI,wan2024approaching}.

We define the likelihood ranks of the candidate noise sequences as 
\begin{align}\label{eq: definition Q_j}
    Q_j = G(Y^n - \widehat{C}^{n,j} ) 
\end{align}
for $j \in [L]$, or equivalently 
 \begin{align}\label{eq: second1 definition Q_j}
    Q_1 = \min_{C^n \in \mathcal{C}_n} G(Y^n - C^n ),
\end{align}
and
\begin{align}\label{eq: second2 definition Q_j}
    Q_j =\min_{C^n \in \mathcal{C}_n \setminus \{  \widehat{C}^{n,1}, \dots , \widehat{C}^{n,j-1} \}} G(Y^n - C^n).
\end{align}
Applying $G^{-1}$ to both sides of eq. \eqref{eq: definition Q_j}, we can alternatively express our decoded codewords as $ \widehat{C}^{n,j}  = Y^n - z^{n,Q_j}$.

Before stating and proving the main result of this section, we show the following auxiliary statement.
\begin{lemma}\label{lem:auxiliary lemma}
   Let  $M_n \coloneq \{ G(Y^n - C^n) \; |\; C^n \in \mathcal{C}_n \setminus \{X^n\}\}$. Then $M_n$ is, when conditioned on $G(N^n) = j$ for some $j \in [2^n]$, a uniformly random $2^k -1$ element subset of $[2^n] \setminus \{j\}$. That is, if $P(G(N^n)=j)>0$, then for any $2^k-1$ element subset $\{a_1, \dots, a_{2^k-1}\} \subset [2^n] \setminus \{j\}$ it holds that 
    \begin{align*}
        P(M_n = \{ a_1, \dots, a_{2^k -1}\} \;|\; G(N^n) = j) = \frac{1}{\dbinom{2^n-1}{2^k-1}}.
    \end{align*} 
\end{lemma}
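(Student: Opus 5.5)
The plan is to condition on everything except the codebook entries other than the transmitted one, and then use the exchangeability of sampling without replacement. Since $Y^n - C^n = N^n + X^n - C^n$, we have
\[
M_n = \{ G(N^n + X^n - C^n) \;|\; C^n \in \mathcal{C}_n \setminus \{X^n\}\}.
\]
Fix $j$ with $P(G(N^n)=j)>0$, and condition further on $N^n = z^{n,j}$ (the unique noise with $G(N^n)=j$), on $W = w$, and on $X^n = x^n$.

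First we show that, given $W=w$ and $C^{n,w} = x^n$, the remaining codewords $\{C^{n,i} : i \neq w\}$ form, as an unordered set, a uniformly random $(2^k-1)$-element subset of $\fb_2^n \setminus \{x^n\}$. By independence of $\mathcal{C}_n$ and $W$, it suffices to count ordered tuples. For each $x^n$ and each $(2^k-1)$-subset $S$ of $\fb_2^n\setminus\{x^n\}$, the number of ordered tuples with entry $x^n$ in position $w$ and remaining entries forming $S$ is $(2^k-1)!$. This count does not depend on $S$, so under the uniform law on tuples the conditional law of $S$ is uniform. Moreover, $N^n$ is independent of $(\mathcal{C}_n, W)$, so adding the conditioning $N^n = z^{n,j}$ leaves this conditional law unchanged.

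Next we apply a bijection. Given $N^n = z^{n,j}$ and $X^n = x^n$, the map
\[
\phi(c^n) = G(z^{n,j} + x^n - c^n)
\]
is a bijection from $\fb_2^n$ to $[2^n]$, being a translation followed by the bijection $G$. It sends $x^n$ to $G(z^{n,j}) = j$, so it restricts to a bijection from $\fb_2^n\setminus\{x^n\}$ onto $[2^n]\setminus\{j\}$. The induced map on $(2^k-1)$-subsets is therefore a bijection, and it carries the uniform distribution to the uniform distribution. Hence
\[
P\bigl(M_n=\{a_1,\dots,a_{2^k-1}\} \;|\; N^n=z^{n,j}, W=w, X^n=x^n\bigr)=\binom{2^n-1}{2^k-1}^{-1}.
\]

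Finally, this conditional probability does not depend on $(w, x^n)$, so averaging over $(W, X^n)$ given $G(N^n)=j$ yields the claim. The only step needing care is the first one: the conditional uniformity of the unordered remaining codebook given the transmitted codeword. It should be handled by the explicit count of ordered tuples above rather than by an informal appeal to symmetry.
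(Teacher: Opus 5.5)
Your proof is correct and follows essentially the same route as the paper: condition on $N^n = z^{n,j}$, use mutual independence to decompose over $W=w$, and count ordered codebook tuples, where the key count is the $(2^k-1)!$ orderings of the non-transmitted entries. The only difference is that you additionally condition on $X^n = x^n$ and push the uniform law forward through the translation bijection $\phi$. This simply factors the paper's direct count $|B_{j,w}| = 2^n (2^k-1)!$ into the $2^n$ choices of the transmitted codeword times the $(2^k-1)!$ orderings.
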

\begin{proof}
     As $G$ is a bijection, it holds that
    \begin{align}
        &P(M_n = \{ a_1, \dots, a_{2^k -1}\} \;|\; G(N^n) = j) \nonumber\\
        &=   P(G^{-1}(M_n) = \{ 
            z^{n,a_1}, \dots, z^{n,a_{2^k -1}}\} \;|\; N^n = z^{n,j}), \label{eq:P(G^{-1}(M_n) ...}
    \end{align} 
    where $G^{-1}(M_n) \coloneqq \{ Y^n - C^n \; |\; C^n \in \mathcal{C}_n \setminus \{X^n\}\}$. Using that, conditioned on $N^n = z^{n,j}$, we have $Y^n = X^n + N^n = C^{n,W} + z^{n,j}$, and that $N^n$ is independent of $X^n$ and $\mathcal{C}_n$, and $W$ is independent of $\mathcal{C}_n$,  eq. \eqref{eq:P(G^{-1}(M_n) ...} simplifies to
    \begin{align}
        &P \big( \{ C^{n,W} - C^n \; |\; C^n \in \mathcal{C}_n \setminus \{C^{n,W}\}\} \nonumber 
         \\
         &\qquad =  \{ 
            z^{n,a_1}- z^{n,j}, \dots, z^{n,a_{2^k -1}}- z^{n,j}\}\big)\nonumber \\
            &= \sum_{w=1}^{2^k} P( \mathcal{C}_n \in B_{j,w}, W=w)  
            = \frac{1}{2^k} \sum_{w=1}^{2^k} P( \mathcal{C}_n \in B_{j,w}), \label{eq: sum_w=1 to 2k}
    \end{align}
    where 
\begin{equation*}
\begin{aligned}
B_{j,w} \coloneqq \bigl\{\, C = (c^{n,1},\dots,c^{n,2^k}) \mid {}
  &\, c^{n,w}\in \fb_2^n \text{ arbitrary}, \\
  &\, c^{n,w} - (z^{n,a_\ell}-z^{n,j}) \in C \\ 
  &\, \text{for } \ell = 1,\dots,2^k-1 \,\bigr\}.
\end{aligned}
\end{equation*}
The tuples in $B_{j,w}$ are precisely those where the $w$-th entry is arbitrary, and other entries are fixed, but their order is arbitrary. Therefore, $|B_{j,w}| = 2^n (2^k-1)!$. As $\mathcal{C}_n$ is uniformly distributed, this gives 
\begin{align*}
    P( \mathcal{C}_n \in B_{j,w}) &= \dfrac{|B_{j,w}|}{(2^k)! \dbinom{2^n}{2^k}} 
                                  = \dfrac{2^n}{2^k \dbinom{2^n}{2^k}} 
                                  = \dfrac{1}{\dbinom{2^n-1}{2^k-1}}.
\end{align*}
Plugging this into eq. \eqref{eq: sum_w=1 to 2k} concludes the proof.
\end{proof}
Now we are ready to move to the main result of this section. As $G$ is a bijection, the likelihood ranks $Q_i$ satisfy the ordering relationship $Q_1 <Q_2 < \dots < Q_{2^k}$. Therefore, in Proposition \ref{prop: exact SO proposition}, in order to not condition on events of probability zero,
we impose $q_1 < q_2 < \dots <q_L.$ Also, if $Q_L  = G(Y^n - \widehat{C}^{n,L})=q_L,$ then necessarily $G(Y^n - C^n) \in \{ q_L+1, \dots, 2^n\}$ for all $C^n \in \mathcal{C}_n \setminus \{\widehat{C}^{n,1}, \dots,\widehat{C}^{n,L} \}.$ As $|\mathcal{C}_n \setminus \{\widehat{C}^{n,1}, \dots,\widehat{C}^{n,L} \}| = 2^k - L$, we additionally impose 
\begin{align}\label{eq: well defined condition q_L}
    2^k - L \leq 2^n - q_L \iff q_L \leq 2^n - 2^k + L.
\end{align}
This is no restriction for practical applications, as $Q_L$ always satisfies eq. \eqref{eq: well defined condition q_L}, so that for $Q_L$ not satisfying eq. \eqref{eq: well defined condition q_L}, simply $P(Q_1 = q_1, \dots, Q_L = q_L) =0$.
\begin{proposition}\label{prop: exact SO proposition}
Let $L \in [2^k-1]$ and 
    $q_1, \dots, q_L \in [2^n]$ with $q_1 <q_2 < \dots < q_L$ and $q_L \leq 2^n - 2^k + L$. Then
    \begin{align}
     &P(Q_1 = q_1, \dots, Q_L = q_L)   \nonumber\\
     &= \frac{\binom{2^n-q_L}{2^k-L}}{\binom{2^n-1}{2^k-1}} \sum_{i=1}^{L} p_{N^n}  (z^{n,q_i})+\frac{\binom{2^n-1-q_L}{2^k-1-L}}{\binom{2^n-1}{2^k-1}}\sum_{i=q_L + 1}^{2^n}p_{N^n}(  z^{n,i}), \label{eq: P(Q_1=..) marginal probability}
    \end{align}
    and if $P(Q_1 = q_1, \dots, Q_L = q_L)>0$, the soft output satisfies
    \begin{align}
       &P(X^n =  \widehat{C}^{n,j} \;|\; Q_1 = q_1, \dots, Q_L = q_L) \label{eq: SO definition} \\
        &=\frac{ p_{N^n}(z^{n,q_j})}{\sum_{i=1}^{L}p_{N^n}(z^{n,q_i})+ \frac{2^k - L}{2^n - q_L}\sum_{i=q_L + 1}^{2^n}p_{N^n}(z^{n,i})} \label{eq: SO fraction 1}\\
         &= \frac{ p_{N^n}(z^{n,q_j})}{\sum_{i=1}^{L}p_{N^n}(z^{n,q_i})+ \frac{2^k - L}{2^n - q_L}\left(1-\sum_{i=1}^{q_L}p_{N^n}(z^{n,i})\right)} \label{eq: SO fraction 2}
    \end{align}
    for $j \in [L]$, where $p_{N^n}$ is the probability mass function of $N^n$.
\end{proposition}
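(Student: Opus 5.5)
We would prove the statement by conditioning on the rank of the true noise, $G(N^n)$, and invoking Lemma~\ref{lem:auxiliary lemma}. The key observation is that the full set of ranks $\{G(Y^n - C^n) : C^n \in \mathcal{C}_n\}$ is $M_n \cup \{G(N^n)\}$, since $Y^n - X^n = N^n$. The event $\{Q_1=q_1,\dots,Q_L=q_L\}$ is therefore the event that this $2^k$-element set has $q_1<\dots<q_L$ as its $L$ smallest elements. Equivalently, the set contains $\{q_1,\dots,q_L\}$, and all its other elements lie in $\{q_L+1,\dots,2^n\}$.

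We split according to where the true noise rank $j=G(N^n)$ sits.

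\emph{Case 1: $j=q_i$ for some $i\in[L]$.} Then $M_n$ must contain the other $L-1$ values $q_{i'}$, $i'\neq i$, together with $2^k-L$ elements from $\{q_L+1,\dots,2^n\}$. By Lemma~\ref{lem:auxiliary lemma}, $M_n$ is uniform over $(2^k-1)$-subsets of $[2^n]\setminus\{j\}$. So the conditional probability is $\binom{2^n-q_L}{2^k-L}\big/\binom{2^n-1}{2^k-1}$, and this is the same for every $i$.

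\emph{Case 2: $j>q_L$.} Then $M_n$ must contain all of $q_1,\dots,q_L$, plus $2^k-1-L$ further elements from $\{q_L+1,\dots,2^n\}\setminus\{j\}$, a set of size $2^n-q_L-1$. This gives conditional probability $\binom{2^n-1-q_L}{2^k-1-L}\big/\binom{2^n-1}{2^k-1}$, again independent of $j$.

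\emph{Case 3: any other $j$.} Either $j<q_L$ with $j\notin\{q_i\}$, in which case $j$ itself would be a rank below $q_L$ not in the list, or one of the required ranks is missing. Both contradict the event, so the conditional probability is zero.

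Summing $P(G(N^n)=j)\,P(\cdot\mid G(N^n)=j)=p_{N^n}(z^{n,j})\,P(\cdot\mid G(N^n)=j)$ over the three cases gives eq.~\eqref{eq: P(Q_1=..) marginal probability}.

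For the soft output, note that $X^n=\widehat C^{n,j}$ together with the conditioning event is exactly the event $\{G(N^n)=q_j\}\cap\{Q_1=q_1,\dots,Q_L=q_L\}$. This holds because $\widehat C^{n,j}=Y^n-z^{n,q_j}$, and $X^n=Y^n-N^n$. By Case 1, its probability is $p_{N^n}(z^{n,q_j})\binom{2^n-q_L}{2^k-L}/\binom{2^n-1}{2^k-1}$.

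Dividing by the marginal, the common factor $\binom{2^n-q_L}{2^k-L}$ cancels. What remains is the ratio
\begin{align*}
\binom{2^n-1-q_L}{2^k-1-L}\Big/\binom{2^n-q_L}{2^k-L}=\frac{2^k-L}{2^n-q_L},
\end{align*}
which follows from the identity $\binom{m}{r}=\frac{m}{r}\binom{m-1}{r-1}$. This gives eq.~\eqref{eq: SO fraction 1}.

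Eq.~\eqref{eq: SO fraction 2} then follows immediately, since $\sum_{i=q_L+1}^{2^n}p_{N^n}(z^{n,i})=1-\sum_{i\le q_L}p_{N^n}(z^{n,i})$ because $G$ is a bijection.

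The main obstacle is the careful event characterization in Case~3 and the edge cases. In particular, when $2^k-L=2^n-q_L$ the binomial in the second term can vanish, and we must check that the binomial ratios remain consistent. The assumption $L\le 2^k-1$ keeps $2^k-1-L\ge 0$, so the binomials are well defined. Once the event equivalence is established, the rest is bookkeeping.
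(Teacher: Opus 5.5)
Your proposal is correct and follows essentially the same route as the paper. You condition on $G(N^n)$ and use Lemma~\ref{lem:auxiliary lemma} to count the admissible $(2^k-1)$-subsets, which gives $\binom{2^n-q_L}{2^k-L}$ when $G(N^n)\in\{q_1,\dots,q_L\}$ and $\binom{2^n-1-q_L}{2^k-1-L}$ when $G(N^n)>q_L$; you rule out other values below $q_L$ and cancel the common factor. Your edge-case worry is unfounded but harmless: under $q_L\le 2^n-2^k+L$ neither binomial vanishes (when $2^n-q_L=2^k-L$ both equal $1$), so the ratio $\frac{2^k-L}{2^n-q_L}$ holds throughout.
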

\begin{proof}
    First note that 
    \begin{align*}
     \{ X^n= \widehat{C}^{n,j}\} = \{ X^n =Y^n- z^{n,Q_j}\}  = \{N^n = z^{n,Q_j} \},
    \end{align*}
    so that eq. \eqref{eq: SO definition} equals 
    \begin{align}\label{eq: fraction SO}
        \frac{P(N^n = z^{n,q_j}, Q_1 = q_1, \dots, Q_L = q_L)}{P(Q_1 = q_1, \dots, Q_L = q_L)}.
    \end{align}
     Note  that  $ N^n = z^{n,q_j} $ if and only if $ q_j = G(N^n)$. If $P(N^n =z^{n,q_j} ) =0,$ then eq. \eqref{eq: fraction SO} as well as eq. \eqref{eq: SO fraction 1} evaluate to $0$. So assume $P(N^n =z^{n,q_j} ) >0$ from now on.  By the definition of the $Q_j$, we can write the numerator of eq. \eqref{eq: fraction SO} as 
    \begin{align*}
        &P( Q_1 = q_1, \dots, Q_L = q_L \;|\; N^n = z^{n,q_j}) P( N^n = z^{n,q_j}) \\
        =&P(M_n \in B_{q_j} \;|\; G(N^n) = q_j) P( N^n = z^{n,q_j}),
    \end{align*}
    where $M_n$ is as in Lemma \ref{lem:auxiliary lemma}, and we define for $j \in [L-1]$
    \begin{align*}
        &B_{q_j} \coloneqq \{ \{ a_1, \dots, a_{2^k -1}\} \subset [2^n] \setminus \{q_j\}\;|\; \\
        &\qquad a_{(i)} = q_i \text{ for } i = 1, \dots, j-1,  \\
        &\qquad a_{(i)} = q_{i+1} \text{ for } i = j, \dots, L-1   \},
    \end{align*}
   which are the $2^k -1$ element subsets of $ [2^n] \setminus \{q_j\}$, where the $L-1$ smallest elements equal $q_i$ or $q_{i+1}$. Here, $a_{(i)}$ is defined as the $i$-th smallest element of $\{ a_1, \dots, a_{2^k -1}\}$. For $j = L,$ define
            \begin{align*}
        &B_{q_L} \coloneqq \{ \{ a_1, \dots, a_{2^k -1}\} \subset [2^n] \setminus \{q_L\}\;|\; \\
        &\qquad a_{(i)} = q_i \text{ for } i = 1, \dots, L-1,    \\
        & \qquad a_{(L)}>q_L\}.
    \end{align*}
    In all cases, $L-1$ elements of $B_{q_j}$ are fixed, so that $2^k -1 -  (L-1) = 2^k -L$ elements can be arbitrarily chosen from $\{q_L +1, \dots, 2^{n}\}$. Values less than $q_L+1$ are excluded as  $a_{(L)}> q_L$ for $j=L$, and this follows from $a_{(L-1)} = q_L$ for $j<L$, so in all cases, the free elements must take values  greater than $q_L$. Therefore, $|B_{q_j}| = \binom{2^n-q_L}{2^k-L} $ and Lemma \ref{lem:auxiliary lemma} yields
    \begin{align}
       &P(N^n = z^{n,q_j}, Q_1 = q_1, \dots, Q_L = q_L) \nonumber \\
       &=   P(M_n \in B_{q_j} \;|\; G(N^n) = q_j) P( N^n = z^{n,q_j}) \nonumber \\
       &= \dfrac{\binom{2^n-q_L}{2^k-L}}{\binom{2^n-1}{2^k-1}}P( N^n = z^{n,q_j}) \label{eq: P(N^n)=z, Q_1=q_1,...}.
    \end{align}
    To evaluate the denominator of eq. \eqref{eq: fraction SO}, write 
    \begin{align}
        &P(Q_1 = q_1, \dots, Q_L = q_L) \nonumber\\
        &= P(G(N^n) \leq q_L, Q_1 = q_1, \dots, Q_L = q_L) \nonumber\\
        &\qquad + P(G(N^n) > q_L, Q_1 = q_1, \dots, Q_L = q_L) \label{eq: P(Q_1 = ) split}
    \end{align}
    and note that $\{ G(N^n) \leq q_L, Q_1 = q_1, \dots, Q_L = q_L \} = \{ G(N^n) \in \{q_1, \dots, q_L\}, Q_1 = q_1, \dots, Q_L = q_L \},$ for if $G(N^n) =G(Y^n - C^{n,W}) = q$  for some $q \notin \{q_1, \dots, q_L\}$, $q <q_L$, that would imply $Y^n - z^{n,q} \in \mathcal{C}_n$ and $G(Y^n -(Y^n - z^{n,q})) = q < G(Y^n- \widehat{C}^{n,L})$, but $Y^n - z^{n,q} \notin \{\widehat{C}^{n,1}, \dots, \widehat{C}^{n,L} \}, $ a contradiction to Definition \ref{defi: decoder}. 
    Using eq. \eqref{eq: P(N^n)=z, Q_1=q_1,...}, we obtain 
    \begin{align}
        &P(G(N^n) \leq q_L, Q_1 = q_1, \dots, Q_L = q_L) \nonumber\\
        &\qquad \qquad \qquad = \frac{\binom{2^n-q_L}{2^k-L}}{\binom{2^n-1}{2^k-1}} \sum_{i=1}^{L} P( N^n = z^{n,q_i}) \label{eq: Prob G(N) < q_L,..}.
    \end{align}
    Furthermore, we have 
    \begin{align}
        &P(G(N^n) > q_L, Q_1 = q_1, \dots, Q_L = q_L) \nonumber\\
        =& \sum_{i=q_L + 1}^{2^n}  P(G(N^n) =i, Q_1 = q_1, \dots, Q_L = q_L). \label{eq: sum i>q_L+1}
    \end{align}
    Note that since $i \geq q_L+1$ for all  summands in eq. \eqref{eq: sum i>q_L+1}, they cannot be evaluated by eq. \eqref{eq: P(N^n)=z, Q_1=q_1,...}. However, an analogous argument gives 
    \begin{align*}
        &P(G(N^n) =i, Q_1 = q_1, \dots, Q_L = q_L) \\
        &= P(M_n \in \tilde{B}_{i} \;|\; G(N^n) = i) P( N^n = z^{n,i}),
    \end{align*}
    where 
    \begin{align*}
        \tilde{B}_{i} 
    &\coloneqq \{ \{ a_1, \dots, a_{2^k -1}\} \subset [2^n] \setminus \{i\}\;|\; \\
        &\qquad \qquad \qquad a_{(l)} = q_l \text{ for } l = 1, \dots, L     \}
    \end{align*}
    and we assumed $P(G(N^n)=i)>0$. The case $P(G(N^n)=i)=0$ is trivial.
    The subsets belonging to $\tilde{B}_{i}$ have their $L$ smallest components fixed to $q_1, \dots, q_L$, and the other $2^k-1-L$ elements can be arbitrarily chosen from $\{q_L+1, \dots, 2^n\}\setminus \{i\}$. Lemma \ref{lem:auxiliary lemma} yields 
    \begin{align}
        &P(G(N^n) =i, Q_1 = q_1, \dots, Q_L = q_L) \nonumber\\
        &= \frac{\binom{2^n-q_L-1}{2^k-1-L}}{\binom{2^n-1}{2^k-1}}P( N^n = z^{n,i}).\label{eq: P(G(N^n)=i,...}
    \end{align}
    Plugging eq. \eqref{eq: P(G(N^n)=i,...} into eq. \eqref{eq: sum i>q_L+1}, combining eqs. \eqref{eq: sum i>q_L+1}, \eqref{eq: Prob G(N) < q_L,..} and \eqref{eq: P(Q_1 = ) split} yields eq. \eqref{eq: P(Q_1=..) marginal probability}. Furthermore, plugging eqs. \eqref{eq: P(N^n)=z, Q_1=q_1,...} and \eqref{eq: P(Q_1=..) marginal probability} into eq. \eqref{eq: fraction SO}, we obtain that eq. \eqref{eq: fraction SO}, and thus also eq. \eqref{eq: SO definition} equals
    \begin{align*}
       \frac{\frac{\binom{2^n-q_L}{2^k-L}}{\binom{2^n-1}{2^k-1}}p_{N^n}(z^{n,q_j})}{\frac{\binom{2^n-q_L}{2^k-L}}{\binom{2^n-1}{2^k-1}} \sum_{i=1}^{L} p_{N^n}  (z^{n,q_i})+\frac{\binom{2^n-1-q_L}{2^k-1-L}}{\binom{2^n-1}{2^k-1}}\sum_{i=q_L + 1}^{2^n}p_{N^n}(  z^{n,i})} .
    \end{align*}
    Simplifying the above yields eq. \eqref{eq: SO fraction 1}, and realizing that $\sum_{i=q_L + 1}^{2^n}p_{N^n}(z^{n,i}) = 1-\sum_{i=1}^{q_L}p_{N^n}(z^{n,i}) $ yields eq. \eqref{eq: SO fraction 2}.
\end{proof}

\begin{figure*}[t]
\centering
\includegraphics[width=\textwidth]{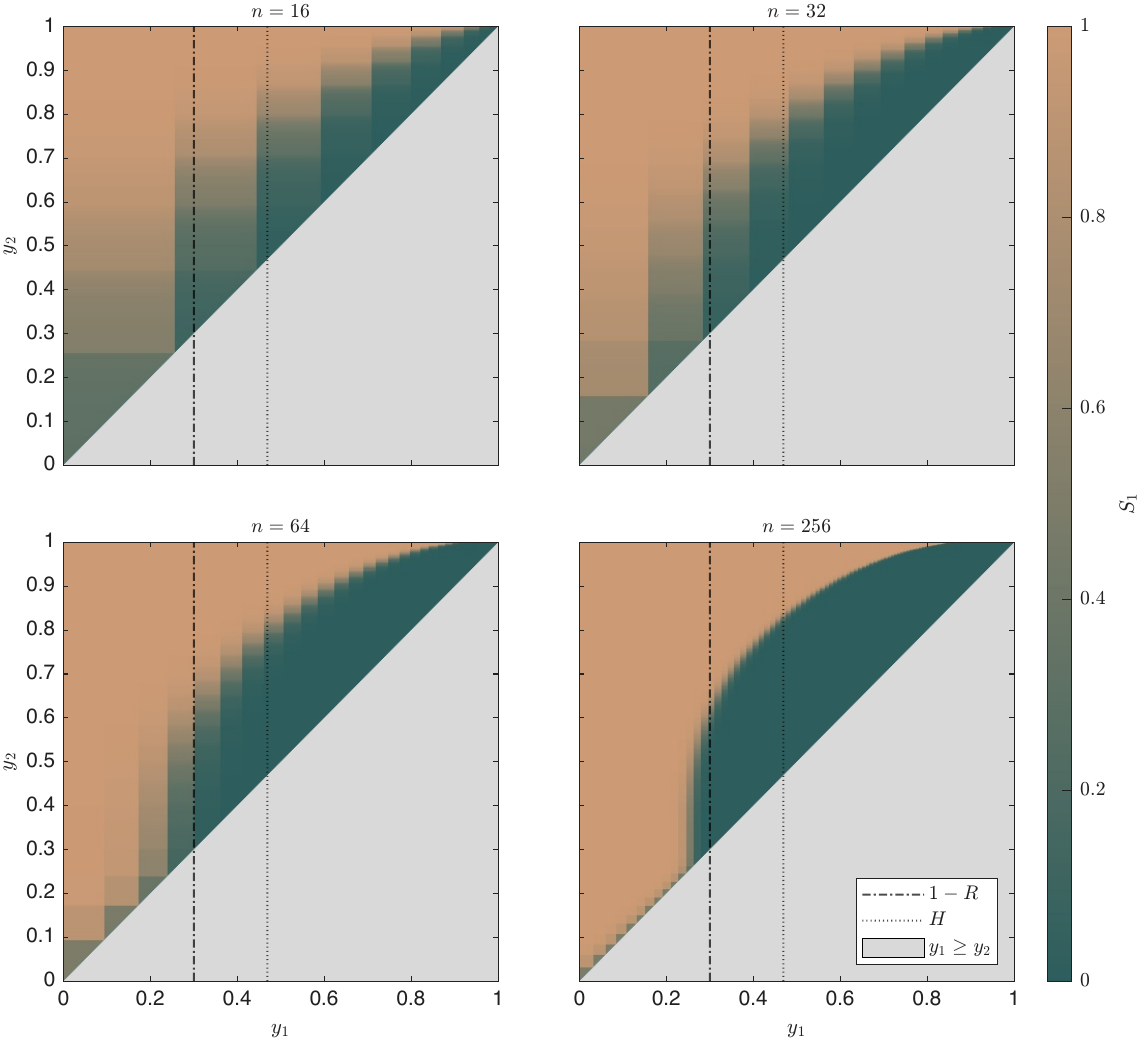}
 \caption{We show the exact finite-blocklength \ac{GRAND} \ac{SO}  $S_1 \coloneq  P(X^n =  \widehat{C}^{n,1} \;|\; Q_1 = 2^{ n y_1}, Q_2 = 2^{n y_2})$ evaluated as eq. \eqref{eq: SO fraction 2}. While Proposition \ref{prop: exact SO proposition} is stated with integer $Q_i$, we plot the continuous extension of eq. \eqref{eq: SO fraction 2} to $ny_i \in \mathbb{R}$, which agrees with conditioning on  $Q_i = 2^{\lfloor ny_i \rfloor}$ whenever $ny_i \in \mathbb{N}$. The channel is the \ac{BSC} in the above-capacity regime with $R= 0.7 >C=1-H=0.531$, $p=0.1$. Here $H = - p \log(p)- (1-p) \log (1-p)$ is the binary Shannon entropy. Tiles with $n=16,32,64$ and $n=256$ are shown.  $1-R$ and $H$ are marked. The color of each pixel indicates the value of $S_1$. The inadmissible region, $y_1 \geq y_2$, is  colored light gray. 
} 
\label{fig:finite n SO}
\end{figure*}

\begin{example}[\acs{BSC}]\label{ex: BSC finite n}
For a string $z^n \in \fb_2^n$ we define its Hamming weight as 
     $   \wt (z^n) = |\{ i \;|\; z^n_i = 1\}|$.
     Given our additive channel model $Y^n = X^n + N^n$ from above, we obtain the \ac{BSC} with crossover probability $0<p < \frac{1}{2}$,  whenever $N^n = (N_1, \dots, N_n) \in \fb_2^n$ has i.i.d. entries with
    $
        P(N_i = 1) = p
    $
    for all $i \in [n]$. Consequently, it holds  for any noise sequence $z^n \in \fb_2^n$ that 
    \begin{align*}
        P(N^n = z^n) = p^{\wt(z^n)}(1-p)^{n- \wt(z^n)}.
    \end{align*}
    Now assume also that $G$ ranks noise sequences in decreasing order of probability, breaking ties arbitrarily. Then the \ac{SO} from eq. \eqref{eq: SO fraction 2} evaluates as
     \begin{align*}
         &P(X^n =  \widehat{C}^{n,j} \;|\; Q_1 = q_1, \dots, Q_L = q_L)\\
         &= \frac{ p^{\wt (z^{n,q_j})}(1-p)^{n-\wt(z^{n,q_j})}}{\sum_{i=1}^{L}p^{\wt (z^{n,q_i})}(1-p)^{n-\wt(z^{n,q_i})}+ \frac{2^k - L}{2^n - q_L}\left(1-T_{q_L}\right)} 
    \end{align*}
    for $j \in [L]$,
    where  
    \begin{align*}
        &T_{q_L} \coloneqq \sum_{l=0}^{\wt (z^{n,q_L})-1} \binom{n}{l} p^l (1-p)^{n-l} 
        \\
        &+ \left[ q_L - \sum_{l=0}^{\wt(z^{n,q_L})-1} \binom{n}{l}\right] p^{\wt(z^{n,q_L})}(1-p)^{n- \wt(z^{n,q_L})}.
    \end{align*}
\end{example}
In Fig. \ref{fig:finite n SO} we plot the \ac{SO} from Example \ref{ex: BSC finite n}, the \ac{BSC}, for list size $L=2$, $j=1$, and increasing values of $n$. The $x$- and $y$-axes are $1/n \log q_1$ and $1/n \log q_2$, respectively. The color of each pixel corresponds to the value of the probability of the first list entry being the transmitted codeword. Two observations are immediate.
\begin{enumerate}
    \item For any $n$, if $q_1$ is fixed, then the likelihood of a correct first list entry increases as $q_2$ increases.
    \item As $n$ gets larger, the picture seems to converge to two sharply separated regions. In the first region, the likelihood of a correct first list entry is $1$, and in the second it is $0$. No likelihood values between $0$ and $1$ seem possible in the large $n$ limit, except possibly on a set of measure $0$.
\end{enumerate}
The veracity of the first observation can be verified by direct computation. In the next Section \ref{sec: Shannon Theory} we will study the limit $n\to \infty$ and see that the second observation is also true.
\section{Shannon Theory}\label{sec: Shannon Theory}
We work under the same setup as in Section \ref{sec: Soft Output exact}, but impose the following assumptions on $G$ and $N^n$ that allow for the precise quantification of rare events involving $G(N^n)$ via the techniques of large deviations theory \cite{dembo2009large}.
\begin{assumption}\label{assumption: ML optimal guessing order assumption}
  We assume that $ G \colon \fb_2^n \to [2^n]$ is a bijection that ranks the noise sequences in decreasing order of probability, breaking ties arbitrarily, i.e., for all $z^n_1, z^n_2 \in \fb_2^n$ it holds that
\begin{align*}
    P(N^n=z^n_1) > P(N^n=z^n_2) 
    \implies G(z_1^n) < G(z_2^n). 
\end{align*}
\end{assumption}
In the next Assumption \ref{assumption: sCGF regularity assumption}, we denote by 
\begin{align*}
    H_\alpha = \lim_{n\to\infty} \frac{1}{n} \frac{1}{1-\alpha}\log\left(\sum_{z^n\in\mathbb{F}_2^n} P(N^n=z^n)^\alpha\right)
\end{align*}
for $\alpha \in (0,1) \cup (1, \infty)$ the Rényi entropy rate of the noise, whenever it exists, and with $H_{\min}\coloneqq \lim_{\alpha \to \infty} H_{\alpha}$ the min-entropy rate of the noise. For $\alpha = 1$, set
\begin{align*}
  H_1 &\coloneq  \lim_{n \to \infty} \frac{1}{n} H(N^n) \\
  &= - \lim_{n \to \infty} \frac{1}{n} \sum_{z^n \in \fb_2^n} P(N^n = z^n)\log(P(N^n = z^n)),
\end{align*}
the Shannon entropy rate of the noise, which we will also denote by $H \coloneq H_1$.
\begin{assumption}\label{assumption: sCGF regularity assumption}
    Assume that \begin{align*}
\Lambda^{N}(\alpha) &\coloneqq  \lim_{n \to \infty} \frac{1}{n} \log \mathbb{E}(G(N^n)^\alpha)\\
&=
 \begin{cases} 
\alpha H_{1/(1+\alpha)} & \text{for } \alpha \in (-1, \infty) \\
-H_{\min} & \text{for } \alpha \leq -1,
\end{cases}
    \end{align*}
and that the derivative of  $\Lambda^{N}(\alpha)$ exists and is continuous on the range $\alpha \in (-1, \infty)$.
\end{assumption}
Assumption~\ref{assumption: sCGF regularity assumption} is known to be satisfied for a large class of noise processes $\{N^n\}_{n\in\mathbb{N}}$.
Arikan \cite[Proposition~5]{Arikan1996} established the case $\alpha>0$ for i.i.d. entries of $N^n$, and Malone and Sullivan \cite{Malone2004} the case $\alpha>0$ for entries forming a Markov chain. Pfister and Sullivan \cite{Pfister2004} showed that the assumption holds for $\alpha>-1$ for a large class of stationary processes, including the i.i.d. and Markov cases, and Christiansen and Duffy \cite{christiansen2013_guesswork} extended the
result from $\alpha>-1$ to $\alpha\in\mathbb{R}$.

Under Assumption \ref{assumption: sCGF regularity assumption}, the derivative of $\Lambda^N$ is continuous everywhere except possibly at $\alpha = -1$. The right-sided limit $ \gamma \coloneq \lim_{\alpha \searrow -1} \frac{d}{d\alpha}\Lambda^{N}(\alpha)$ plays a significant role in the large-deviations analysis of $1/n \log G(N^n)$ \cite{christiansen2013_guesswork}. 
 Christiansen and Duffy \cite{christiansen2013_guesswork} show that $\gamma$ is related to the exponential growth rate of the number of most likely strings. In the special case of $N^n= (N_1, \dots, N_n)$ having i.i.d. entries, \cite[Lemma 2]{christiansen2013_guesswork} yields
\begin{align}\label{eq: gamma iid noise binary}
    \gamma = \begin{cases}
        1 & \text{if } P(N_1 = 0) = P(N_1 = 1) \\
        0 & \text{if }P(N_1 = 0) \neq P(N_1 = 1).
    \end{cases}
\end{align}

We now recall the \ac{LDP} for  $\{1/n \log G(N^n)\}$, established by Christiansen and Duffy   \cite{christiansen2013_guesswork}. It serves as the foundation of our large-$n$ analysis.
\begin{theorem}\cite{christiansen2013_guesswork}\label{theo: G LDP}
    The sequence
    $\{1/n \log G(N^n)\}$ satisfies the \ac{LDP} with the convex lower semi-continuous rate function,  $I^N : \mathbb{R} \to [0, \infty]$,
\begin{align*}
I^N(x) &\coloneqq \sup_{\alpha \in \mathbb{R}} \{x \alpha - \Lambda^N(\alpha)\} \\
&=         \begin{cases}
             -x+H_{\min}, & x \in [0,\gamma] \\
             \sup_{\alpha \in \mathbb{R}} \{ x \alpha - \Lambda^{N} (\alpha)\}, & x \in (\gamma, 1] \\
             + \infty, & x \notin [0, 1].
        \end{cases}
\end{align*}
 In particular, 
 $I^N(x)$ is strictly convex on $(\gamma, 1]$ wherever it is finite, and 
 it holds for all $x \in \mathbb{R}$ that
    $
        I^N(x) = 0 \iff x =     \Lambda^N(0)' = \lim_{n \to \infty} \frac{1}{n} H(N^n) = H
    $
   and
   $
        I^N(0) = H_{\min}.
    $
\end{theorem}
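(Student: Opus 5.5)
The plan is to derive the \ac{LDP} from the Gärtner--Ellis theorem \cite{dembo2009large}, applied to $\{1/n \log G(N^n)\}$, and then to supplement it with a direct argument on the interval $[0,\gamma]$, where Gärtner--Ellis does not supply the lower bound.

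\textbf{Step 1: the scaled cumulant generating function.} Since $\mathbb{E}\bigl(2^{\alpha n \cdot \frac{1}{n}\log G(N^n)}\bigr) = \mathbb{E}(G(N^n)^\alpha)$, Assumption \ref{assumption: sCGF regularity assumption} says exactly that the scaled cumulant generating function of $\{1/n\log G(N^n)\}$ exists and equals $\Lambda^N(\alpha)$ for every $\alpha\in\mathbb{R}$. It is finite everywhere, because $1\le G(N^n)\le 2^n$. This bound also gives $|\Lambda^N(\alpha)|\le |\alpha|$, so $\Lambda^N$ is convex and finite with $0$ in the interior of its domain.

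\textbf{Step 2: upper bound and identification of $I^N$.} The Gärtner--Ellis upper bound then holds for all closed sets, with rate function the Legendre transform $I^N(x)=\sup_\alpha\{x\alpha-\Lambda^N(\alpha)\}$, which is automatically convex and lower semi-continuous. The explicit form of $I^N$ follows from properties of $\Lambda^N$.
\begin{itemize}
\item Outside $[0,1]$: because $0\le 1/n\log G(N^n)\le 1$, we have $\Lambda^N(\alpha)\le \max(\alpha,0)$, and letting $\alpha\to\pm\infty$ gives $I^N=+\infty$ there.
\item On $[0,\gamma]$: for $\alpha\le -1$ the function $x\alpha-\Lambda^N(\alpha)=x\alpha+H_{\min}$ is nondecreasing in $\alpha$ when $x\ge 0$. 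For $\alpha>-1$ the derivative of the objective is $x-\Lambda^{N\prime}(\alpha)\le x-\gamma\le 0$, using that the right derivative at $-1$ is $\gamma$ and $\Lambda^N$ is convex. So the supremum is attained at $\alpha=-1$, giving $-x+H_{\min}$.
\item On $(\gamma,1]$: the supremum is kept as stated.
\end{itemize}
The special values follow quickly. $I^N(0)=\sup_\alpha\{-\Lambda^N(\alpha)\}=H_{\min}$, since $\Lambda^N$ is nondecreasing and equals $-H_{\min}$ for $\alpha\le-1$. By differentiability at $0$, $I^N(x)=0$ exactly when $x=\Lambda^{N\prime}(0)$, and $\Lambda^{N\prime}(0)=H_1=H$ by differentiating $\alpha H_{1/(1+\alpha)}$ at $\alpha=0$.

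\textbf{Step 3: lower bound on $(\gamma,1)$.} Every $x$ in the image of $\Lambda^{N\prime}$ on $(-1,\infty)$ is an exposed point of $I^N$ with exposing hyperplane $\alpha$, so the Gärtner--Ellis lower bound holds there. Strict convexity of $I^N$ on $(\gamma,1]$ is then the standard duality between differentiability of $\Lambda^N$ on $(-1,\infty)$ and strict convexity of its conjugate. Points near $1$ that are not in this image would be handled by the lower semi-continuity/convex-closure argument, or by approximating with exposed points.

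\textbf{Step 4 (main obstacle): lower bound on $[0,\gamma]$.} Here $\Lambda^N$ has a kink at $\alpha=-1$, so these points are not exposed and Gärtner--Ellis gives nothing. I would argue directly, using that $G$ orders strings by decreasing probability.
\begin{itemize}
\item Since $\Lambda^N$ has right derivative $\gamma$ at $-1$, for small $\epsilon>0$ the moment $\mathbb{E}(G(N^n)^{-1+\epsilon})$ behaves like $2^{-n(H_{\min}-\gamma\epsilon+o(\epsilon))}$.
\item Combined with monotonicity of $i\mapsto P(G(N^n)=i)$, this should show that the strings of rank up to $2^{n(\gamma-\delta)}$ each have probability at least $2^{-n(H_{\min}+\delta')}$, with $\delta'\to0$ as $\delta\to0$.
\item Then for $x\in(0,\gamma)$,
\[
P\Bigl(\tfrac1n\log G(N^n)\in(x-\delta,x+\delta)\Bigr)\ge \bigl(2^{n(x+\delta)}-2^{n(x-\delta)}\bigr)2^{-n(H_{\min}+\delta')}\approx 2^{-n(H_{\min}-x)}.
\]
\item The endpoints $x=0$ and $x=\gamma$ follow by lower semi-continuity.
\end{itemize}
Making the second bullet rigorous, i.e.\ extracting a pointwise probability bound from the one-sided derivative at the kink, is the step I expect to be hardest. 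It is essentially the content of \cite{christiansen2013_guesswork}.
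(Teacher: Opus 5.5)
The paper gives no proof of this theorem; it quotes it from \cite{christiansen2013_guesswork}, so there is no in-paper argument to compare against. Your route is the standard one for this result and the plan is sound. It consists of the scaled cumulant generating function on all of $\mathbb{R}$, the G\"artner--Ellis upper bound, the lower bound at the exposed points $\Lambda^{N\prime}(\alpha)$ for $\alpha>-1$, and a separate monotonicity argument on the affine piece $[0,\gamma]$ where $\Lambda^N$ has its kink.

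\textbf{Closing Step 4.} The step you leave as ``should show'' is where the real work is. It closes in a few lines with exactly your ingredients. Write $p_i=P(G(N^n)=i)$, which is non-increasing in $i$ by Assumption \ref{assumption: ML optimal guessing order assumption}, and set $m=2^{\lfloor n(\gamma-\delta)\rfloor}$. Since $p_1\le\mathbb{E}(G(N^n)^{\alpha})$ for all $\alpha$, you get $\limsup_n\frac1n\log p_1\le-H_{\min}$. Using $p_i\le p_1$ for $i<m$ and $p_i\le p_m$ for $i\ge m$,
\[
\mathbb{E}\bigl(G(N^n)^{-1+\epsilon}\bigr)\le\tfrac{1}{\epsilon}\bigl(p_1m^{\epsilon}+p_m2^{n\epsilon}\bigr),\qquad 0<\epsilon\le 1 .
\]
Suppose $p_m<2^{-n(H_{\min}+\delta')}$ along a subsequence. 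This gives $\Lambda^N(-1+\epsilon)\le\max\{-H_{\min}+\epsilon(\gamma-\delta),\,-H_{\min}-\delta'+\epsilon\}$. But convexity and $\Lambda^N(-1)=-H_{\min}$ give $\Lambda^N(-1+\epsilon)=-H_{\min}+\gamma\epsilon+o(\epsilon)$, since the right derivative at $-1$ is $\lim_{\alpha\searrow-1}\Lambda^{N\prime}(\alpha)=\gamma$. For small $\epsilon$ this is a contradiction. Hence, for any $\delta,\delta'>0$, every string of rank at most $2^{n(\gamma-\delta)}$ eventually has probability at least $2^{-n(H_{\min}+\delta')}$, which is what your counting bound needs.

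When $\gamma<H_0$ you can instead reuse Step 3, which avoids expanding $\Lambda^N$ at the kink. The lower bound at an exposed point $x'$ plus monotonicity gives $p_i\ge 2^{-n(I^N(x')+x'+2\eta)}$ for $i\le 2^{n(x'-\eta)}$. Then $I^N(x')+x'\to H_{\min}$ as $x'\searrow\gamma$.

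\textbf{Three smaller corrections.}

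First, $\Lambda^{N\prime}(0)=H$ does not follow from the product rule. In fact $\Lambda^{N\prime}(0)=\lim_{\alpha\to 0}H_{1/(1+\alpha)}$, and nothing assumed says $\beta\mapsto H_\beta$ is continuous at $1$ with value equal to the Shannon rate. Use a sandwich instead. R\'enyi entropy is non-increasing in its order, so the right derivative $\lim_{\beta\nearrow 1}H_\beta$ is at least $\limsup_n\frac1n H(N^n)$. Likewise the left derivative $\lim_{\beta\searrow 1}H_\beta$ is at most $\liminf_n\frac1n H(N^n)$. Differentiability at $0$ then forces all four quantities to coincide.

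Second, for the lower bound at a limit point you need $I^N(x_k)\to I^N(x)$ along approximating points. That is the opposite inequality to lower semi-continuity, so it is not what gives you the endpoints.
\begin{itemize}
\item On $[0,\gamma]$ it holds because $I^N$ is affine there.
\item At $H_0$ it holds because a closed convex function is continuous along segments of its domain \cite{Rockafellar1970}; this is the argument of Lemma \ref{lem: I(H_0)}. $H_0$ is the only point of $(\gamma,1]$ that can lie outside the range of $\Lambda^{N\prime}$ on $(-1,\infty)$ while $I^N$ is finite there, and $I^N(H_0)$ is finite for the \ac{BSC}.
\end{itemize}

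Third, when $\gamma=0$ (the i.i.d.\ non-uniform case used later in the paper), $(0,\gamma)$ is empty. Then $x=0$ must come either from approximation within $(0,H_0)$ or directly from $p_1$. For the latter, $\liminf_n\frac1n\log p_1\ge -H_{\min}$ follows from $\mathbb{E}(G(N^n)^{-1+\epsilon})\le p_1 2^{n\epsilon}/\epsilon$ and continuity of $\Lambda^N$ at $-1$.
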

Of particular interest to us is the following direct estimate on the exponential decay rate of the probability of a fixed noise sequence.
\begin{lemma}\cite[Corollary 4]{christiansen2013_guesswork}\label{lem: Guesswork paper corr}
     For $n \in \mathbb{N}$, $\epsilon>0$ and $y \in [0,1]$, let 
      \begin{align*}
          K_n(y, \epsilon) \coloneq \left\{  z^n \in \fb_2^n \; \bigg|\; \frac{1}{n} \log G(z^n) \in (y-\epsilon, y+\epsilon) \right\}.
      \end{align*}
      Then it holds for any $y \in [0,1]$ that 
      \begin{align*}
          &\lim_{\epsilon \to 0} \liminf_{n \to \infty} \frac{1}{n} \log \inf_{z^n \in K_n(y, \epsilon)}   P(N^n = z^n) \\
          &= \lim_{\epsilon \to 0} \limsup_{n \to \infty} \frac{1}{n} \log \sup_{z^n \in K_n(y, \epsilon)}   P(N^n = z^n) \\
          &= -( I^N (y) + y).
      \end{align*}
\end{lemma}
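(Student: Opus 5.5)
The plan is to derive the lemma from the \ac{LDP} of Theorem \ref{theo: G LDP}, using the monotonicity given by Assumption \ref{assumption: ML optimal guessing order assumption}. Because $G$ ranks noise sequences in decreasing order of probability, $i \mapsto p_{N^n}(z^{n,i})$ is non-increasing. So over $K_n(y,\epsilon)$ the supremum is attained at the smallest rank, roughly $2^{n(y-\epsilon)}$, and the infimum at the largest rank, roughly $2^{n(y+\epsilon)}$. A single string's probability is thus squeezed between interval masses of $1/n\log G(N^n)$ divided by interval cardinalities. I will also use that $I^N$ is finite, convex and lower semi-continuous on $[0,1]$, hence continuous on $[0,1]$, so infima of $I^N$ over intervals shrinking to $y$ converge to $I^N(y)$.

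\emph{Upper bound.} Take $z^n\in K_n(y,\epsilon)$ with rank $r \ge 2^{n(y-\epsilon)}$. Every rank $i \in [2^{n(y-2\epsilon)}, r]$ has $p_{N^n}(z^{n,i}) \ge p_{N^n}(z^n)$. Summing over these ranks gives
\begin{align*}
p_{N^n}(z^n)\bigl(2^{n(y-\epsilon)} - 2^{n(y-2\epsilon)}\bigr) \le P\Bigl(\tfrac1n\log G(N^n)\in[y-2\epsilon,y+\epsilon]\Bigr).
\end{align*}
The \ac{LDP} upper bound on this closed interval yields
\begin{align*}
\limsup_{n\to\infty}\tfrac1n\log\sup_{K_n(y,\epsilon)}p_{N^n} \le -(y-\epsilon)-\inf_{[y-2\epsilon,y+\epsilon]}I^N.
\end{align*}
Letting $\epsilon\to0$ and using lower semi-continuity gives $\le -(I^N(y)+y)$. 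The edge case $y=0$ needs no extra work, since the interval is simply truncated at $0$.

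\emph{Lower bound.} Take $z^n\in K_n(y,\epsilon)$, so its rank satisfies $r\le 2^{n(y+\epsilon)}$. Every rank in $(2^{n(y+\epsilon)},2^{n(y+2\epsilon)})$ has probability at most $p_{N^n}(z^n)$, so
\begin{align*}
p_{N^n}(z^n)\,2^{n(y+2\epsilon)} \ge P\Bigl(\tfrac1n\log G(N^n)\in(y+\epsilon,y+2\epsilon)\Bigr).
\end{align*}
The \ac{LDP} lower bound on this open interval gives
\begin{align*}
\liminf_{n\to\infty}\tfrac1n\log\inf_{K_n(y,\epsilon)}p_{N^n}\ge -(y+2\epsilon)-\inf_{(y+\epsilon,y+2\epsilon)}I^N.
\end{align*}
For $y<1$, continuity of $I^N$ on $[0,1]$ lets $\epsilon\to0$ and yields $\ge -(I^N(y)+y)$. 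Since $\inf\le\sup$, the two bounds together give both equalities with the common value $-(I^N(y)+y)$.

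\emph{Main obstacle.} The hard case is $y=1$ in the lower bound, because no ranks exceed $2^n$ and the right-hand window is empty. In that case the infimum over $K_n(1,\epsilon)$ is just the smallest probability of any string. First I would pull $\epsilon$ inside the window, using the interval $(1-\epsilon/2,1)$ of ranks beyond a typical $z^n$, but this only controls strings of rank up to $2^{n(1-\epsilon/2)}$, not the very last ones. To cover those I would look for a structural argument: express $\log p_{N^n}(z^{n,i})$ as a function of $\log i$ and use its convexity, as inherited from the Legendre-transform structure of $\Lambda^N$, to extrapolate the decay rate to the endpoint. Failing that, I would fall back on a direct bound on the minimal string probability through $\Lambda^N(\alpha)$ for large $\alpha$, which is the route \cite{christiansen2013_guesswork} presumably takes.
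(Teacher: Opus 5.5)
The paper does not prove this lemma; it imports it from Christiansen and Duffy. Your sandwich argument (monotonicity from Assumption \ref{assumption: ML optimal guessing order assumption} plus the two LDP bounds) is the natural route, and it is complete for $y\in[0,H_0)$. That is the only range the paper actually uses (Lemma \ref{lem: main lemma}).

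\textbf{The continuity premise is wrong.} You assume $I^N$ is finite, hence continuous, on all of $[0,1]$. This is false in general: by Lemma \ref{lem: I continuous}, $I^N=+\infty$ on $(H_0,1]$, and continuity is only guaranteed on $[0,H_0)$. Your lower bound needs right-continuity at $y$, i.e.\ $\inf_{(y+\epsilon,y+2\epsilon)}I^N\to I^N(y)$. You should take this from that lemma for $y<H_0$. For $y\in(H_0,1]$, both sides are $-\infty$ by your upper bound alone. At $y=H_0<1$, your argument as written would prove a lower bound that is in fact false, as explained next.

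\textbf{The case you call the main obstacle cannot be closed.} The infimum half of the statement is false at $y=H_0$ whenever $I^N(H_0)<\infty$. Take the BSC, where $H_0=1$. The all-ones string has rank $2^n$, so it lies in $K_n(1,\epsilon)$ for every $\epsilon$ and $n$, and its probability is $p^n$. The first expression therefore equals $\log p$. But Example \ref{ex: BSC gamma, H_0} gives $-(I^N(1)+1)=\tfrac12\log(p(1-p))$, which is strictly larger than $\log p$ for $p<\tfrac12$.

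The underlying reason is that every string of weight above $n/2$ has $\frac1n\log G\in(1-\frac1n,1]$. Probabilities ranging from about $(p(1-p))^{n/2}$ down to $p^n$ all collapse onto the endpoint. No convexity extrapolation, and no bound on the minimal probability via $\Lambda^N(\alpha)$, can recover $p^n$ from the rate function.

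The same failure occurs when $H_0<1$ and $I^N(H_0)<\infty$. Then $K_n(H_0,\epsilon)$ contains ranks above $2^{n(H_0+\epsilon/2)}$. By the LDP upper bound and monotonicity, their probabilities are super-exponentially small, so the left-hand side is $-\infty$ while the right-hand side is finite.

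The supremum half does hold for all $y\in[0,1]$, using your upper bound together with the left-continuity in Lemma \ref{lem: I(H_0)}. The right conclusion is to restrict the lemma to $y\in[0,H_0)$, or to $y\neq H_0$. On that range your two bounds already give a full proof.
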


\subsection*{Results}
First we prove the following auxiliary result on the analytic properties of the noise rate function $I^N$ that we will need throughout this work. 
\begin{lemma}\label{lem: I continuous}
    The limit $H_0 \coloneqq \lim_{\alpha \to 0 } H_{\alpha}$ exists and $I^N (x)<\infty$ for all $x \in [0, H_0)$, and $I^N(x) = \infty$ for all $x > H_0$, so that $I^N$ is continuous on  $[0,H_0)$, and strictly convex on $(\gamma, H_0).$
\end{lemma}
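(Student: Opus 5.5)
The plan is to read off everything from the Legendre--Fenchel structure of $I^N$ in Theorem \ref{theo: G LDP}, using Assumption \ref{assumption: sCGF regularity assumption}, which gives $\Lambda^N(\alpha)=\alpha H_{1/(1+\alpha)}$ for $\alpha>-1$.

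\textbf{Step 1: existence of $H_0$.} Since $\Lambda^N$ is a limit of convex functions of $\alpha$ (log-moment generating functions), it is convex. A convex function that is finite on $(-\infty,\infty)$ has a left derivative at every point, so $\lim_{\alpha\to\infty}\Lambda^N(\alpha)/\alpha$ exists in $(-\infty,\infty]$. It is finite, because $G(N^n)\le 2^n$ gives $\Lambda^N(\alpha)\le \alpha$, and it equals $\lim_{\alpha\to\infty}H_{1/(1+\alpha)}=\lim_{\beta\searrow 0}H_\beta$. Monotonicity of Rényi entropies in the order (each $H_\beta$ is nonincreasing in $\beta$ at finite $n$, hence in the limit) also yields existence directly, with $H_0\le 1$. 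I identify $H_0=\sup_\alpha \frac{d}{d\alpha}\Lambda^N(\alpha)=\lim_{\alpha\to\infty}\Lambda^N(\alpha)/\alpha$.

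\textbf{Step 2: finiteness and infiniteness of $I^N$.} For $x>H_0$, I use $\Lambda^N(\alpha)\le \alpha H_0+o(\alpha)$; indeed $\Lambda^N(\alpha)/\alpha\nearrow H_0$ by convexity, so $\Lambda^N(\alpha)\le\alpha H_0$ for $\alpha>0$. Then $x\alpha-\Lambda^N(\alpha)\ge \alpha(x-H_0)\to\infty$, so $I^N(x)=\infty$.

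For $x\in[0,H_0)$, Theorem \ref{theo: G LDP} already gives $I^N(x)=H_{\min}-x<\infty$ on $[0,\gamma]$. For $x\in(\gamma,H_0)$, I note that the derivative $(\Lambda^N)'$ is continuous and nondecreasing on $(-1,\infty)$, with limit $\gamma$ as $\alpha\searrow-1$ and limit $H_0$ as $\alpha\to\infty$ (the limit of the derivative equals the asymptotic slope for a convex function). By the intermediate value theorem there is an $\alpha_x\in(-1,\infty)$ with $(\Lambda^N)'(\alpha_x)=x$. Concavity of $\alpha\mapsto x\alpha-\Lambda^N(\alpha)$ then makes $\alpha_x$ a maximizer, so $I^N(x)=x\alpha_x-\Lambda^N(\alpha_x)<\infty$.

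\textbf{Step 3: continuity and strict convexity.} $I^N$ is convex and finite on the interval $[0,H_0)$, so it is continuous on the interior $(0,H_0)$. At $0$ it is affine on $[0,\gamma]$ near $0$ when $\gamma>0$. When $\gamma=0$, lower semicontinuity gives $\liminf_{x\searrow 0}I^N(x)\ge I^N(0)$, while convexity on $[0,x_0]$ gives $\limsup_{x\searrow 0}I^N(x)\le I^N(0)$; together these give right-continuity at $0$. Strict convexity on $(\gamma,H_0)$ follows from Theorem \ref{theo: G LDP}, which asserts strict convexity on $(\gamma,1]$ wherever $I^N$ is finite, combined with the finiteness just shown.

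\textbf{Main obstacle.} The delicate point is Step 2's surjectivity claim, namely that $(\Lambda^N)'$ actually tends to $H_0$ rather than stopping short of it. I would argue this by convexity: if $\sup(\Lambda^N)'=s$, then $\Lambda^N(\alpha)\le \Lambda^N(0)+s\alpha$, so $H_0\le s$. Conversely $\Lambda^N(\alpha)/\alpha\to s$, so $s=H_0$. The edge case $H_0\le\gamma$ must also be handled; there the interval $(\gamma,H_0)$ is empty and the claim holds trivially.
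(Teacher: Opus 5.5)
Your proposal is correct. For the existence of $H_0$, the case $x>H_0$, continuity and strict convexity it matches the paper; the paper cites Rockafellar's Theorem 10.2 where you give the one-dimensional lsc-plus-convexity argument at $0$ by hand.

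The genuine difference is how you prove finiteness of $I^N$ on $[0,H_0)$.

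\emph{The paper's route.} It first reduces the supremum to $I^N(x)=\sup_{\alpha>-1}\alpha\,(x-H_{1/(1+\alpha)})$ and then bounds the objective directly. The objective is continuous on $(-1,\infty)$, tends to $H_{\min}-x$ as $\alpha\searrow-1$, and is eventually negative because $H_{1/(1+\alpha)}\to H_0>x$. This handles all of $[0,H_0)$ at once, uses only continuity of the objective, and needs neither a maximizer nor the piecewise formula of Theorem~\ref{theo: G LDP} on $[0,\gamma]$.

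\emph{Your route.} You split off $[0,\gamma]$ and, on $(\gamma,H_0)$, construct a maximizer $\alpha_x$ with $(\Lambda^N)'(\alpha_x)=x$ via the intermediate value theorem. This uses the continuity of $(\Lambda^N)'$ from Assumption~\ref{assumption: sCGF regularity assumption} in an essential way. It also requires knowing that the range of $(\Lambda^N)'$ is exactly $(\gamma,H_0)$. In return you get the Legendre parametrization $I^N(x)=x\alpha_x-\Lambda^N(\alpha_x)$, with the supremum attained, which the paper's argument does not give.

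\emph{Two steps to tighten.} First, write out $\sup_\alpha(\Lambda^N)'(\alpha)\le H_0$; your ``conversely'' only asserts it. It follows from $(\Lambda^N)'(\alpha)\le(\Lambda^N(\beta)-\Lambda^N(\alpha))/(\beta-\alpha)\to H_0$ as $\beta\to\infty$. This is the same chord-slope computation the paper carries out in the proof of Lemma~\ref{lem: I(H_0)}. Second, in Step 1 the existence of $\lim_{\alpha\to\infty}\Lambda^N(\alpha)/\alpha$ follows from monotonicity of the chord slopes $\Lambda^N(\alpha)/\alpha$ (using $\Lambda^N(0)=0$), not from the existence of one-sided derivatives. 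Your alternative argument via monotonicity of R\'enyi entropies already suffices there.
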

\begin{proof}
Since the Rényi entropy is non-increasing in $\alpha$ \cite[p.14]{Polyanskiy_Wu_2025}, the same holds for the Rényi entropy rate whenever it exists, which it does by Assumption \ref{assumption: sCGF regularity assumption}. Furthermore, the Rényi entropy is bounded between $0$ and $\log 2^n = n$ \cite[p.14]{Polyanskiy_Wu_2025}, which means $0 \leq H_{\alpha} \leq 1$. Therefore, $H_0 \coloneqq \lim_{\alpha \to 0 } H_{\alpha} \in [0,1]$ exists.

To show the second part, note that by Assumption \ref{assumption: sCGF regularity assumption} we have 
\begin{align}\label{eq: I^n definition}
    I^N(x) = \sup_{\alpha \in \mathbb{R}} \{x \alpha - \Lambda^N(\alpha)\}
\end{align}
with $\Lambda^N(\alpha) = \alpha H_{1/(1+\alpha)}$ for $\alpha >-1,$ and $\Lambda^N(\alpha) = -H_{\min}$ for $\alpha \leq -1.$
If $\alpha \leq -1,$ then the term inside the supremum in eq. \eqref{eq: I^n definition} is bounded above by $-x + H_{\min}$, and  $-x + H_{\min} =  \lim_{\alpha \searrow -1} \alpha x  - \alpha H_{1/(1+\alpha)}$. So in particular,
\begin{align}\label{eq: I^n sup alpha >-1}
    I^N(x) = \sup_{\alpha > -1} \left\{\alpha \left(x  -  H_{1/(1+\alpha)} \right)\right\}.
\end{align}

Now let $x \in [0,H_0)$. By Assumption \ref{assumption: sCGF regularity assumption}, the term inside the supremum in eq. \eqref{eq: I^n sup alpha >-1} is continuous in $\alpha$, so it is bounded on compact intervals. As  $\lim_{\alpha \searrow -1} \alpha \left(x  -  H_{1/(1+\alpha)} \right) = -x + H_{\min},$ to show that $I^N(x)< \infty,$ it suffices to show that $\limsup_{\alpha \to \infty} \alpha \left(x  -  H_{1/(1+\alpha)} \right) < \infty.$ Let $\epsilon>0$ be such that $x+\epsilon < H_0$. Then, since $H_{1/(1+\alpha)} \xrightarrow[]{ \alpha \to \infty} H_0 > x + \epsilon$, it holds for all $\alpha>0$ large enough that  $\alpha \left(x  -  H_{1/(1+\alpha)} \right) < \alpha \left(x  -  (x+ \epsilon) \right)= - \alpha\epsilon \leq 0 . $ It follows that $I^N(x) < \infty.$

Next, let $x>H_0$. Then there exists an $\epsilon >0$ such that $x-\epsilon >H_0$. For analogous reasons to those above, it holds that $\alpha \left(x  -  H_{1/(1+\alpha)} \right) \geq \alpha \left(x  -  (x-\epsilon) \right)= \alpha \epsilon$ for all sufficiently large $\alpha$. But, as the right-hand side is unbounded in $\alpha$, it follows that $I^N(x) = \infty.$

The strict convexity on $(\gamma, H_0)$ then follows from Theorem \ref{theo: G LDP}. As $I^N$ is convex, lower semi-continuous and finite in $[0,H_0)$, it is continuous in $[0,H_0)$ \cite[Theorem 10.2]{Rockafellar1970}.
\end{proof}
Besides establishing the analytic properties of the rate function $I^N$, Lemma \ref{lem: I continuous} reveals the role that the Hartley entropy rate
$H_0$ plays in our analysis. Since $I^N$ is the rate function of the \ac{LDP} satisfied by $\frac{1}{n}\log G(N^n)$, the probability that
the noise has rank larger than $2^{nx}$ decays exponentially in $n$ for
$H<x<H_0$, but super-exponentially for $x>H_0$, where $I^N(x)=\infty$.

Beyond that, Lemma \ref{lem: I continuous} makes no statement about  the value of $I^N$ at the boundary point $H_0$. The value $I^N(H_0)$ plays an important role in one of our main results, Corollary \ref{cor: L=2 corollary}. We will see that when $I^N(H_0) =\infty$, a sufficiently late second list entry can rescue confidence in the first entry for every
$y_1\in(y^*,H_0)$. But if $I^N(H_0)< \infty$, then there exists a second threshold $\bar{y}>y^*$ such that a rescue becomes impossible once $y_1 \in  (\bar{y},H_0)$.
We give the following characterization of $I^N(H_0)$ as the negative of the right-sided derivative of the Rényi entropy rate $\alpha\mapsto H_\alpha$ at $\alpha =0$.
\begin{lemma}\label{lem: I(H_0)}
    It holds that 
    \begin{align}
        \lim_{y \nearrow H_0} I^N(y) &=I^N(H_0) \label{eq: limit I^n y to H_0 1}\\
        &=\lim_{\alpha \to \infty} \alpha (H_0 - H_{1/(1+\alpha)})\label{eq: limit I^n y to H_0 2}\\
        &= -  \frac{d}{d\alpha} H_{\alpha}\big|_{0^+}, \label{eq: limit I^n y to H_0 3}
    \end{align}
   where the limits exist in $[0, \infty]$. 
\end{lemma}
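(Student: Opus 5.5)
The plan is to work from the representation eq. \eqref{eq: I^n sup alpha >-1} obtained in the proof of Lemma \ref{lem: I continuous}. Evaluated at $x=H_0$, it reads
\begin{align*}
I^N(H_0)=\sup_{\alpha>-1} f(\alpha), \qquad f(\alpha)\coloneqq \alpha\left(H_0-H_{1/(1+\alpha)}\right)=\alpha H_0-\Lambda^N(\alpha).
\end{align*}

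\textbf{Step 1: reduce the supremum to $\alpha>0$.} Because $\beta\mapsto H_\beta$ is non-increasing, $H_{1/(1+\alpha)}\le H_0$ whenever $\alpha>0$. Hence $f(\alpha)\ge 0$ for $\alpha>0$. For $\alpha\in(-1,0]$ we have $1/(1+\alpha)\ge 1$, so $H_{1/(1+\alpha)}\le H_0$ still holds; since $\alpha\le 0$, this gives $f(\alpha)\le 0$. Therefore
\begin{align*}
I^N(H_0)=\sup_{\alpha>0} f(\alpha)\ge 0 .
\end{align*}

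\textbf{Step 2: show $f$ is monotone on $(0,\infty)$, giving eq. \eqref{eq: limit I^n y to H_0 2}.} I will use that $\Lambda^N$ is convex on $(-1,\infty)$. Each map $\alpha\mapsto \frac1n\log\mathbb{E}(G(N^n)^\alpha)$ is convex by Hölder's inequality, and a pointwise limit of convex functions is convex. Consequently $f$ is concave on $(-1,\infty)$, with $f(0)=0$.

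A concave function that is nonnegative on all of $(0,\infty)$ must be non-decreasing there. Indeed, if $f(\alpha_1)>f(\alpha_2)$ for some $\alpha_1<\alpha_2$, concavity would force $f\to-\infty$ linearly as $\alpha\to\infty$, contradicting $f\ge 0$. Hence
\begin{align*}
\sup_{\alpha>0}f(\alpha)=\lim_{\alpha\to\infty} f(\alpha)\in[0,\infty],
\end{align*}
and this limit exists. This establishes eq. \eqref{eq: limit I^n y to H_0 2}. I expect this step, the convexity of $\Lambda^N$ and the resulting monotonicity of $f$, to be the main point requiring care.

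\textbf{Step 3: identify the limit with the one-sided derivative, giving eq. \eqref{eq: limit I^n y to H_0 3}.} Substitute $\beta=1/(1+\alpha)$, so that $\alpha=(1-\beta)/\beta$ and $\beta\searrow 0$ as $\alpha\to\infty$. Then
\begin{align*}
f(\alpha)=(1-\beta)\,\frac{H_0-H_\beta}{\beta}.
\end{align*}
By definition $H_\beta\to H_0$ as $\beta\to 0$, and $(1-\beta)\to 1$. So the existence of $\lim_{\alpha\to\infty}f(\alpha)$ in $[0,\infty]$, proved in Step 2, is the same as the existence of the difference-quotient limit $\lim_{\beta\searrow0}(H_\beta-H_0)/\beta$. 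This limit is by definition $\frac{d}{d\alpha}H_\alpha\big|_{0^+}$, and it equals $-\lim_{\alpha\to\infty} f(\alpha)$, in $[-\infty,0]$. This gives eq. \eqref{eq: limit I^n y to H_0 3}.

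\textbf{Step 4: left-continuity at $H_0$, giving eq. \eqref{eq: limit I^n y to H_0 1}.} Take $y\nearrow H_0$ with $y>H$. On $[H,H_0)$ the function $I^N$ is convex with $I^N(H)=0$ (Theorem \ref{theo: G LDP}) and $I^N\ge 0$, so it is non-decreasing there. Hence $\lim_{y\nearrow H_0}I^N(y)$ exists in $[0,\infty]$.

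Lower semicontinuity of $I^N$ gives
\begin{align*}
\lim_{y\nearrow H_0}I^N(y)\ge I^N(H_0).
\end{align*}
If $I^N(H_0)=\infty$, the two sides are therefore equal. If $I^N(H_0)<\infty$, fix $y_0\in[H,H_0)$. Convexity gives, for $y=(1-t)y_0+tH_0$,
\begin{align*}
I^N(y)\le(1-t)I^N(y_0)+tI^N(H_0),
\end{align*}
and letting $t\to1$ yields $\limsup_{y\nearrow H_0} I^N(y)\le I^N(H_0)$. Combining the two bounds gives eq. \eqref{eq: limit I^n y to H_0 1}.
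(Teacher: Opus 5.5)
Your proof is correct and follows essentially the same route as the paper. It uses lower semicontinuity plus a convex-combination bound for eq.~\eqref{eq: limit I^n y to H_0 1}, reduces the supremum in eq.~\eqref{eq: I^n sup alpha >-1} to $\alpha\ge 0$, gets monotonicity of $f(\alpha)=\alpha H_0-\Lambda^N(\alpha)$ from convexity of $\Lambda^N$, and finishes with the substitution $\beta=1/(1+\alpha)$. Your deduction of monotonicity from concavity of $f$ and $f\ge 0$ on $(0,\infty)$ is a slightly lighter variant of the paper's bound ${\Lambda^N}'(\alpha)\le H_0$, and it avoids the differentiability assumption.

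One small repair is needed in Step~4. Choose an arbitrary $y_0<H_0$ rather than $y_0\in[H,H_0)$, because that interval is empty when $H=H_0$ (e.g.\ uniform noise). The monotonicity remark there is also unnecessary, since the lower-semicontinuity and convexity bounds together already force the limit to exist.
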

\begin{proof}
    By lower semi-continuity, 
    \begin{align}\label{eq: lower semi-continuous}
        I^N(H_0) \leq \liminf_{y \nearrow H_0} I^N(y).
    \end{align}
    So if $I^N(H_0) = \infty$, eq. \eqref{eq: limit I^n y to H_0 1}  follows immediately. Let $I^N(H_0) < \infty$. Due to eq. \eqref{eq: lower semi-continuous}, it suffices to show 
        \begin{align}\label{eq: upper semi-continuous}
         \limsup_{y \nearrow H_0} I^N(y)  \leq I^N(H_0).
    \end{align}
    By Lemma \ref{lem: I continuous}, and since $I^N(H_0) < \infty$, it holds that $I^N(y) < \infty$ for all $y \in [0,H_0]$. Take some $x <H_0$. Then every $y \in (x,H_0)$ can be written as $y= (1-t_y)x + t_y H_0$ for  $t_y \coloneqq (y-x)/(H_0-x)$. In particular, $0<t_y<1$ and  $t_y \to 1$ as $y \nearrow H_0$. Convexity of $I^N$ implies 
    \begin{align*}
        I^N(y) \leq (1-t_y)I^N(x)+t_y I^N(H_0).
    \end{align*}
   Taking $\limsup$ on both sides, we obtain
        \begin{align*}
        \limsup_{y \nearrow H_0}  I^N(y) &\leq  \limsup_{y \nearrow H_0}(1-t_y)I^N(x)+t_y I^N(H_0) \\
        &= I^N(H_0).
    \end{align*}
    This shows eq. \eqref{eq: limit I^n y to H_0 1}. Consider the representation of $I^N$ in eq. \eqref{eq: I^n sup alpha >-1}. Since $\alpha \mapsto H_{\alpha}$ is non-increasing, it holds that $H_0 - H_{1/(1+\alpha)} \geq0$ and thus $\alpha (H_0 - H_{1/(1+\alpha)}) \leq 0$ for all $-1<\alpha \leq 0$. Since $\alpha (H_0 - H_{1/(1+\alpha)})=0$ for $\alpha=0$, we can further simplify eq. \eqref{eq: I^n sup alpha >-1} to obtain
\begin{align}\label{eq: I^n sup alpha >0}
    I^N(H_0) = \sup_{\alpha \geq0} \left\{\alpha \left(H_0  -  H_{1/(1+\alpha)} \right)\right\}.
\end{align}
To show that eq. \eqref{eq: I^n sup alpha >0} equals eq. \eqref{eq: limit I^n y to H_0 2}, it suffices to show that $f(\alpha) \coloneqq  \alpha \left(H_0  -  H_{1/(1+\alpha)}  \right) = \alpha H_0 - \Lambda^N(\alpha) $ is non-decreasing in $[0,\infty).$ As $\Lambda^N$ is the scaled cumulant generating function of $\{ 1/n \log G(N^n)\}$, it is convex \cite[Lemma 2.3.9]{dembo2009large}. Moreover, it is differentiable by Assumption \ref{assumption: sCGF regularity assumption}. Since $\Lambda^N(\alpha)/\alpha = H_{1/(1+\alpha)} \xrightarrow{\alpha \to \infty} H_0$, it follows for all $\alpha, \beta \in [0,\infty)$, $\beta > \alpha$, that 
\begin{align*}
{\Lambda^{N}}' (\alpha) \leq \frac{\Lambda^{N}(\beta) - \Lambda^{N}(\alpha)}{\beta - \alpha} &= \frac{\frac{\Lambda^{N}(\beta)}{\beta} - \frac{\Lambda^{N}(\alpha)}{\beta}}{1-\frac{\alpha}{\beta}}\\
&\xrightarrow{\beta \to \infty} H_0.
\end{align*}
Therefore, $f(\alpha)' = H_0 - {\Lambda^{N}}' (\alpha) \geq 0$ for all $\alpha \in [0, \infty)$ and $f$ is non-decreasing. Finally, eq. \eqref{eq: limit I^n y to H_0 3} follows from eq. \eqref{eq: limit I^n y to H_0 2} by noting that 
\begin{align*}
   \alpha (H_0 - H_{1/(1+\alpha)}) = - \frac{1/(1+\alpha)}{1/\alpha}\frac{H_0 - H_{1/(1+\alpha)}}{0- 1/(1+\alpha)} 
\end{align*}
and then taking the limit $\alpha \to \infty$.
\end{proof}
Before moving on to  our main results, we continue the \ac{BSC} example. 
\begin{example}[\acs{BSC} continued]\label{ex: BSC gamma, H_0}
In the \ac{BSC} with $0<p<\frac{1}{2}$, we have 
\begin{align*}
    H_{\alpha} = \frac{1}{1-\alpha} \log \left( p^{\alpha} + (1-p)^{\alpha} \right)
\end{align*}
for $\alpha \in (0,1) \cup (1,\infty)$. Consequently, a computation gives $H_{\min} = - \log(1-p)$, $H_0 = 1$, eq. \eqref{eq: gamma iid noise binary} gives $\gamma =0$, and eq. \eqref{eq: limit I^n y to H_0 2} or eq. \eqref{eq: limit I^n y to H_0 3} give  $I^N(H_0)= - \log\left(2 \sqrt{p(1-p)}\right).$
\end{example}
Studying the \ac{SOGRAND} formula eq. \eqref{eq: SO fraction 1}, we notice that in order to study its large-blocklength limit, two non-trivial objects are of interest. The first is the exponential decay rate of $ p_{N^n}(z^{n, 2^{\lfloor ny \rfloor} }),$ the probability of a noise sequence at exponential rank $2^{\lfloor ny \rfloor} $. This rate is given by Lemma \ref{lem: Guesswork paper corr}, which is itself a corollary of the guesswork \ac{LDP} \cite[Corollary 4]{christiansen2013_guesswork}. The second quantity of interest is the tail probability $P(G(N^n) \geq 2^{ \lfloor n y \rfloor}+ 1)= \sum_{i=
  2^{ \lfloor n y \rfloor}+ 1}^{2^n}p_{N^n}(z^{n,i}).$  Here, a direct application of the \ac{LDP}, together with the continuity of $I^N$ established
in Lemma \ref{lem: I continuous}, provides the exponential decay rate, exhibiting the utility of the \ac{LDP} as a distributional statement.
\begin{lemma}\label{lem: main lemma}
  For any $y \in [0,H_0)$,  it holds that
    \begin{align}\label{eq: limit P(1/n log G equal y)}
 \lim_{n \to \infty} \frac{1}{n} \log  p_{N^n}(z^{n, 2^{\lfloor ny \rfloor} })  = -(I^N(y)+y)
    \end{align}
    and 
    \begin{align}\label{eq: limit P(1/n log G bigger y)}
  \lim_{n \to \infty} \frac{1}{n}\log \left(     \sum_{i=
  2^{ \lfloor n y \rfloor}+ 1}^{2^n}p_{N^n}(z^{n,i}) \right) =   - \inf_{x \geq y} I^N(x).
    \end{align}
\end{lemma}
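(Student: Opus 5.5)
For the first limit, eq. \eqref{eq: limit P(1/n log G equal y)}, I will use Lemma \ref{lem: Guesswork paper corr} directly. The sequence $z^{n,2^{\lfloor ny\rfloor}}$ has $G$-rank $2^{\lfloor ny\rfloor}$. Hence $\frac1n\log G(z^{n,2^{\lfloor ny\rfloor}})=\lfloor ny\rfloor/n\in(y-\epsilon,y+\epsilon)$ for every $\epsilon>0$ and all $n>1/\epsilon$, so $z^{n,2^{\lfloor ny\rfloor}}\in K_n(y,\epsilon)$ eventually. This gives the sandwich
\begin{align*}
\inf_{z^n\in K_n(y,\epsilon)}P(N^n=z^n)\le p_{N^n}(z^{n,2^{\lfloor ny\rfloor}})\le \sup_{z^n\in K_n(y,\epsilon)}P(N^n=z^n).
\end{align*}
Taking $\frac1n\log$, then $\liminf_n$ on the left and $\limsup_n$ on the right, and letting $\epsilon\to0$, both bounds tend to $-(I^N(y)+y)$. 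The left-hand side does not depend on $\epsilon$, so its $\liminf$ and $\limsup$ agree, and the limit exists. The hypothesis $y<H_0$ is only needed to ensure that the value is finite; Lemma \ref{lem: Guesswork paper corr} holds for all $y\in[0,1]$.

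For the second limit, eq. \eqref{eq: limit P(1/n log G bigger y)}, I will write the tail sum as $P(G(N^n)\ge 2^{\lfloor ny\rfloor}+1)=P(\frac1n\log G(N^n)\in A_n)$, where $A_n=(\frac1n\log(2^{\lfloor ny\rfloor}+1),\infty)$.

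\emph{Upper bound.} Since $\lfloor ny\rfloor\ge ny-1$, we have $A_n\subset[y-\delta,\infty)$ for $n$ large, for any fixed $\delta>0$. The LDP upper bound on this closed set (Theorem \ref{theo: G LDP}) gives $\limsup\le-\inf_{x\ge y-\delta}I^N(x)$. I then let $\delta\to0$. This step needs $\inf_{x\ge y-\delta}I^N(x)\to\inf_{x\ge y}I^N(x)$, which holds because $I^N$ is convex with minimum $0$ at $H$:
\begin{itemize}
\item If $y\le H$, both infima are $0$ for small $\delta$.
\item If $y>H$, then $I^N$ is nondecreasing on $[H,\infty)$, so $\inf_{x\ge y-\delta}I^N=I^N(y-\delta)$. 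Continuity of $I^N$ on $[0,H_0)$ (Lemma \ref{lem: I continuous}) gives $I^N(y-\delta)\to I^N(y)=\inf_{x\ge y}I^N$, using $y<H_0$.
\end{itemize}

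\emph{Lower bound.} Since $\frac1n\log(2^{\lfloor ny\rfloor}+1)\le y+1/n$, we have $A_n\supset(y+\delta,\infty)$ for $n$ large. The LDP lower bound on this open set gives $\liminf\ge-\inf_{x>y+\delta}I^N(x)$. For $\delta$ small enough that $y+2\delta<H_0$, the right-hand side is at least $-I^N(y+2\delta)$. Continuity of $I^N$ at $y$ gives $I^N(y+2\delta)\to I^N(y)$. Combined with the monotonicity above, this yields $\liminf\ge -\inf_{x\ge y}I^N(x)$ when $y\ge H$. When $y<H$, I instead take $x=H>y+\delta$, where $I^N(H)=0$, so $\liminf\ge0$.

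The main obstacle is the boundary handling in the second limit: the infimum over the closed and open sets must coincide as $\delta\to0$. This is exactly where $y<H_0$ and the continuity from Lemma \ref{lem: I continuous} are essential. At $y=H_0$, the open-set infimum to the right would be $+\infty$ while the closed-set infimum could be finite, and the argument would break down.
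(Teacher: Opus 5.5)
Your proof is correct and follows essentially the paper's route. The first limit uses the same sandwich via Lemma~\ref{lem: Guesswork paper corr}; for the second, the LDP upper bound on a slightly enlarged closed half-line and the lower bound on an open half-line are combined, with the continuity of $I^N$ on $[0,H_0)$ from Lemma~\ref{lem: I continuous} used to let the slack vanish. The differences are cosmetic. The paper lower-bounds directly with the open set $(y,1]$ and passes to the limit by continuity alone, rather than via your case split around $H$. Also, your $A_n$ is off by one (the event is exactly $\frac1n\log G(N^n)>\lfloor ny\rfloor/n$), but both of your inclusions still hold for the correct set.
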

\begin{proof}
     As $\frac{1}{n} \log  2^{\lfloor ny \rfloor} \to y$, we have for all $n$ large enough that $z^{n,2^{\lfloor ny \rfloor}} \in  K_n(y, \epsilon)$, where  $K_n(y, \epsilon)$ is as defined in Lemma \ref{lem: Guesswork paper corr}. So in particular
   $
       \inf_{z^n \in K_n(y, \epsilon)}   P(N^n = z^n) \leq  p_{N^n}(z^{n, 2^{ \lfloor n y \rfloor} }) \leq  \sup_{z^n \in K_n(y, \epsilon)}  P(N^n = z^n),
   $
  and  Lemma \ref{lem: Guesswork paper corr} 
  yields 
    eq. \eqref{eq: limit P(1/n log G equal y)}.
    
    To show eq. \eqref{eq: limit P(1/n log G bigger y)},  note that  $\sum_{i= 2^{ \lfloor ny \rfloor}  + 1}^{2^n}p_{N^n}(z^{n,i})  = P(1/n \log G(N^n) >  \lfloor ny \rfloor /n)$. Let $\epsilon>0.$ Then it holds for all $n$ large enough that
    \begin{align*}
      P(1/n \log G(N^n) > y) &\leq   \sum_{i= 2^{ \lfloor ny \rfloor}  + 1}^{2^n}p_{N^n}(z^{n,i})  \\
      &\leq P(1/n \log G(N^n) > y-\epsilon).
    \end{align*}
    By definition of the \ac{LDP} \cite[p. 5]{dembo2009large}, we have 
    \begin{align*}
        - \inf_{x \in (y,1]} I^N(x) &\leq \liminf_{n \to \infty} \frac{1}{n} \log P( 1/n \log G(N^n)>y) \\
        &\leq \liminf_{n \to \infty} \frac{1}{n} \log \sum_{i= 2^{ \lfloor ny \rfloor}  + 1}^{2^n}p_{N^n}(z^{n,i}) .
    \end{align*}
    Moreover, \begin{align*}
      &\limsup_{n \to \infty} \frac{1}{n} \log \sum_{i= 2^{ \lfloor ny \rfloor}  + 1}^{2^n}p_{N^n}(z^{n,i}) \\
      &\leq \limsup_{n \to \infty} \frac{1}{n} \log P(1/n \log G(N^n) > y-\epsilon) \\
        &\leq    - \inf_{x \in [y-\epsilon,1]} I^N(x).
    \end{align*}
    By Lemma \ref{lem: I continuous}, $I^N$ is continuous in $y$. So taking $\epsilon \rightarrow0 $ and noting that $ \inf_{x \in (y,1]} I^N(x) =  \inf_{x \in [y,1]} I^N(x)$ gives  eq. \eqref{eq: limit P(1/n log G bigger y)}.
\end{proof}

We now turn to the main results of this work.

\begin{proposition}\label{prop: main proposition}
Suppose $k = \lfloor nR \rfloor $ for some $R \in (0,1).$ 
Let 
\begin{align*}
    &S(2^{ny_1 }, \dots, 2^{n y_L}) \\
    &\coloneqq P(X^n =  \widehat{C}^{n,1} \;|\; Q_1 = 2^{\lfloor ny_1 \rfloor }, \dots, Q_L = 2^{\lfloor ny_L \rfloor })
\end{align*}
be the probability of the first codeword in the list being correct, given that codeword $i$ has corresponding likelihood rank $2^{\lfloor ny_i \rfloor}.$
Let $y_1, y_2, \dots, y_L \in [0,H_0)$  be such that $y_1< y_2 <\dots< y_L$.
 
For $L=1$, define $\Delta(y_1) \coloneqq   1-R + \inf_{x \geq y_1} I^N(x) - (I^N(y_1) +y_1)$, and for $L\geq 2$, define 
\begin{align*}
    &\Delta(y_1, \dots,y_L) \\
    &\coloneqq    \min \left\{ I^N(y_2) +y_2 , 1-R + \inf_{x \geq y_L} I^N(x)  \right\}\\
    &\qquad \qquad  - (I^N(y_1) +y_1 ).
\end{align*}
Then
\begin{align}\label{eq: lim 1/n log (S) ->1}
 \Delta(y_1, \dots,y_L) >0 \implies S(2^{ny_1 }, \dots, 2^{n y_L}) \xrightarrow{n \to \infty} 1
\end{align}
and
\begin{align}\label{eq: lim 1/n log (1-S) ->0}
  \Delta(y_1, \dots,y_L) <0  \implies S(2^{ny_1 }, \dots, 2^{n y_L}) \xrightarrow{n \to \infty} 0.
\end{align}
In particular, if $ \Delta(y_1, \dots,y_L) >0,$ then
\begin{align}\label{eq: lim 1/n log (1-S) = - Delta}
     \lim_{n \to \infty} \frac{1}{n} \log (1-S(2^{ny_1 }, \dots, 2^{n y_L})) = - \Delta(y_1, \dots,y_L) < 0
\end{align}
and if $\Delta(y_1, \dots,y_L) < 0$, then 
\begin{align}\label{eq: lim 1/n log (S) =  Delta}
    \lim_{n \to \infty} \frac{1}{n} \log S(2^{ny_1 }, \dots, 2^{n y_L}) = \Delta(y_1, \dots,y_L)   < 0.
\end{align}
\end{proposition}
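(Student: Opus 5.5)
Using Proposition \ref{prop: exact SO proposition} with $j=1$ and $q_i = 2^{\lfloor n y_i\rfloor}$, I will write the soft output as $S = A/(A+B+C)$ with
\begin{align*}
A &= p_{N^n}(z^{n,q_1}), \qquad B = \sum_{i=2}^{L} p_{N^n}(z^{n,q_i}), \\
C &= \frac{2^k-L}{2^n-q_L}\sum_{i=q_L+1}^{2^n} p_{N^n}(z^{n,i}).
\end{align*}
Then $1-S = (B+C)/(A+B+C)$. The proof reduces to identifying the exponential rates of $A$, $B$ and $C$ and comparing them. First, the admissibility condition $q_L \le 2^n-2^k+L$ holds for large $n$ whenever $y_L<1$. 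If instead $y_L \geq R$ is large, note that $2^n - q_L \ge 2^n(1-2^{-n(1-y_L)+1})$, so $\frac1n\log(2^n-q_L)\to 1$ since $y_L<H_0\le 1$. Also $\frac1n\log(2^k-L)\to R$ because $k=\lfloor nR\rfloor$ and $L$ is fixed.

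By eq. \eqref{eq: limit P(1/n log G equal y)} of Lemma \ref{lem: main lemma}, $\frac1n\log A\to -(I^N(y_1)+y_1)$. Likewise, each summand of $B$ has rate $-(I^N(y_i)+y_i)$. By eq. \eqref{eq: limit P(1/n log G bigger y)}, $\frac1n\log C\to -(1-R+\inf_{x\ge y_L}I^N(x))$.

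To reduce $B$ to its $i=2$ term, I would not use the rates at all. Under Assumption \ref{assumption: ML optimal guessing order assumption}, $p_{N^n}(z^{n,q})$ is non-increasing in $q$, so $p_{N^n}(z^{n,q_2})\le B\le (L-1)p_{N^n}(z^{n,q_2})$. Since $L$ is fixed, this gives $\frac1n\log B\to -(I^N(y_2)+y_2)$. This monotonicity is exactly why only $y_1,y_2,y_L$ appear.

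The principle that a finite sum of exponentials has the rate of its largest term then gives
\[
\frac1n\log(B+C)\to -\min\{I^N(y_2)+y_2,\;1-R+\inf_{x\ge y_L}I^N(x)\} =: -D.
\]
Thus $\Delta = D-(I^N(y_1)+y_1)$, which is the difference between the rate of $B+C$ and the rate of $A$. For $L=1$ the term $B$ is absent and $D = 1-R+\inf_{x\ge y_1}I^N(x)$, which matches the stated $\Delta(y_1)$.

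Next I write $1-S = \frac{(B+C)/A}{1+(B+C)/A}$ and note that $\frac1n\log\big((B+C)/A\big)\to -\Delta$.
\begin{itemize}
\item If $\Delta>0$, then $(B+C)/A\to0$ exponentially. Hence $1-S\sim (B+C)/A$, which gives eq. \eqref{eq: lim 1/n log (1-S) = - Delta} and $S\to1$.
\item If $\Delta<0$, then $A/(B+C)\to 0$ exponentially. Writing $S = \frac{A/(B+C)}{1+A/(B+C)}$ gives $\frac1n\log S\to\Delta$ and $S\to0$.
\end{itemize}

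The step needing most care is the rate of $C$ in the degenerate cases. All quantities must be finite: $I^N(y_i)<\infty$ for $y_i<H_0$ by Lemma \ref{lem: I continuous}, and $\inf_{x\ge y_L}I^N(x)\le I^N(y_L)<\infty$. I must also check that $A>0$ for large $n$, which follows from the finite rate of $A$, so that the conditioning event has positive probability and Proposition \ref{prop: exact SO proposition} applies. Beyond that, the remaining work is the routine log-sum-exp bookkeeping, where each limit is finite and the prefactor $(2^k-L)/(2^n-q_L)$ has rate exactly $R-1$.
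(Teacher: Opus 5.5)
Your proposal is correct and follows essentially the same route as the paper. Like the paper, you write $S$ via Proposition~\ref{prop: exact SO proposition}, take the exponential rates of the three pieces from Lemma~\ref{lem: main lemma}, and combine them with the largest-term principle. The only difference is cosmetic: you collapse $B$ to its $i=2$ term with the finite-$n$ sandwich $p_{N^n}(z^{n,q_2})\le B\le (L-1)\,p_{N^n}(z^{n,q_2})$ from Assumption~\ref{assumption: ML optimal guessing order assumption}, whereas the paper applies the same monotonicity at the level of rates, through the fact that $y\mapsto I^N(y)+y$ is non-decreasing.
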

\begin{proof}
First note that as $R<1$ and $y_L < H_0 \leq 1$, for $n$ large enough, the condition $2^{\lfloor n y_L \rfloor} \leq 2^n - 2^{\lfloor nR \rfloor} + L$  of Proposition \ref{prop: exact SO proposition} is always satisfied. Also $I^N(y_1)< \infty$ as $y_1 <H_0$, so that $P( Q_1 = 2^{\lfloor ny_1 \rfloor }, \dots, Q_L = 2^{\lfloor ny_L \rfloor })>0$ for large enough $n$, cf. eq. \eqref{eq: P(Q_1=..) marginal probability}.  By Proposition \ref{prop: exact SO proposition} it follows that for large enough $n$, 
    \begin{align*}
   &S(2^{ny_1 }, \dots, 2^{n y_L}) \\
   &=\frac{ p_{N^n}(z^{n,2^{\lfloor ny_1 \rfloor }})}{\sum_{i=1}^{L}p_{N^n}(z^{n,2^{\lfloor ny_i \rfloor }})+ \frac{2^{\lfloor nR \rfloor} - L}{2^n - 2^{\lfloor ny_L \rfloor }}\sum_{i=2^{\lfloor ny_L \rfloor } + 1}^{2^n}p_{N^n}(z^{n,i})}.
    \end{align*}
   Let  $L\geq 2$ and $\Delta(y_1, \dots,y_L) > 0$.  Note that $1/n \log \left( \frac{2^{\lfloor nR \rfloor} - L}{2^n - 2^{\lfloor ny_L \rfloor }}\right) \xrightarrow{n \to \infty} -(1-R)$.  By Lemma \ref{lem: main lemma} and the principle of the largest term \cite[Lemma 1.2.15]{dembo2009large}, 
      \begin{align}
        &\lim_{n \to \infty} \frac{1}{n} \log \left(1- S(2^{ny_1 }, \dots, 2^{n y_L}) \right) \nonumber\\
        &=  - \min \left\{ \min_{i=2, \dots, L} \{ I^N(y_i) +y_i\} , 1-R + \inf_{x \geq y_L} I^N(x)  \right\} \label{eq: min delta >0 1} \\
        &+ \min \left\{ \min_{i=1, \dots, L} \{ I^N(y_i) +y_i\} , 1-R + \inf_{x \geq y_L} I^N(x)  \right\}. \label{eq: min delta >0 2}
   \end{align}
   By eq. \eqref{eq: limit P(1/n log G equal y)} and Assumption \ref{assumption: ML optimal guessing order assumption}, $y \mapsto I^N(y)+y$ is non-decreasing. Therefore, the
   minima in eq. \eqref{eq: min delta >0 1} and eq. \eqref{eq: min delta >0 2} reduce to 
   \begin{align*}
        &- \min \left\{  I^N(y_2) +y_2 , 1-R + \inf_{x \geq y_L} I^N(x)  \right\} \\
        &\qquad+ \min \left\{ I^N(y_1) +y_1 , 1-R + \inf_{x \geq y_L} I^N(x)  \right\}. 
   \end{align*}
From $\Delta(y_1, \dots,y_L) > 0$ it follows that $\min \left\{ I^N(y_1) +y_1 , 1-R + \inf_{x \geq y_L} I^N(x)  \right\} = I^N(y_1) +y_1$. This shows eq. \eqref{eq: lim 1/n log (1-S) = - Delta}.

To show  eq. \eqref{eq: lim 1/n log (S) =  Delta}, let $\Delta(y_1, \dots,y_L) < 0$. As $y \mapsto I^N(y)+y$ is non-decreasing and $\Delta(y_1, \dots,y_L) < 0$, $\Delta(y_1, \dots,y_L) $ simplifies to $
    \Delta(y_1, \dots,y_L) 
   =   1-R + \inf_{x \geq y_L} I^N(x)    - (I^N(y_1) +y_1 )
$ 
and we obtain
   \begin{align*}
        &\lim_{n \to \infty} \frac{1}{n} \log S(2^{ny_1 }, \dots, 2^{n y_L}) \\
        &=  - (I^N(y_1) +y_1 )  \\
        &\qquad + \min \left\{ \min_{i=1, \dots, L} \{ I^N(y_i) +y_i\} , 1-R + \inf_{x \geq y_L} I^N(x)  \right\} \\
        &=  - (I^N(y_1) +y_1 )  \\
        &\qquad + \min \left\{  I^N(y_1) +y_1 , 1-R + \inf_{x \geq y_L} I^N(x)  \right\} \\
        &=  - (I^N(y_1) +y_1 )  + 1-R + \inf_{x \geq y_L} I^N(x)  \\
        &=  \Delta(y_1, \dots,y_L).
   \end{align*}
   The case $L=1$ is analogous, and eqs. \eqref{eq: lim 1/n log (1-S) = - Delta} and \eqref{eq: lim 1/n log (S) =  Delta} imply eq. \eqref{eq: lim 1/n log (S) ->1} and eq. \eqref{eq: lim 1/n log (1-S) ->0}.
\end{proof}
By Proposition \ref{prop: main proposition}, the asymptotic reliability of the first codeword in the list depends only on the first, second and last likelihood ranks.
Moreover, we saw in the proof of Proposition \ref{prop: main proposition} that if $\Delta <0$, then $\Delta$ depends only on $y_1$ and $y_L$. Note that it also holds that $1-R + \inf_{x \geq y_L} I^N(x) < I^N(y_1)+y_1$ implies $\Delta(y_1, \dots, y_L) <0$. While $y \mapsto I^N(y)+y$ is non-decreasing, we will see in the proof of Corollary \ref{cor: L=1 threshold y*} that for $y> \gamma$, $y \mapsto I^N(y)+y$ is even strictly increasing, so that $ y_1 < y_2$ and $ \gamma <y_2$ imply  $I^N(y_1)+y_1 < I^N(y_2)+y_2$, and  $1-R + \inf_{x \geq y_L} I^N(x) > I^N(y_1)+y_1$ implies $\Delta(y_1, \dots,y_L) >0$. This shows the following.
\begin{corollary}\label{cor: if y_1 > gamma, then asymmptotic reliabiblity depends only on y_1,y_L}
    In the context of Proposition \ref{prop: main proposition}, let  $L \geq 2$ and $y_2 > \gamma$. Then 
    \begin{align*}
        1-R + \inf_{x \geq y_L} &I^N(x) > I^N(y_1)+y_1 \\
        &\implies S(2^{ny_1 }, \dots, 2^{n y_L}) \xrightarrow{n \to \infty} 1
    \end{align*}
    and 
    \begin{align*}
        1-R + \inf_{x \geq y_L} &I^N(x) < I^N(y_1)+y_1 \\
        &\implies S(2^{ny_1 }, \dots, 2^{n y_L}) \xrightarrow{n \to \infty} 0.
    \end{align*}
\end{corollary}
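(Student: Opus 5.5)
The plan is to reduce both implications to the sign of $\Delta(y_1,\dots,y_L)$ and then invoke Proposition \ref{prop: main proposition}. Write $A \coloneqq I^N(y_1)+y_1$, $B \coloneqq I^N(y_2)+y_2$ and $D \coloneqq 1-R+\inf_{x\ge y_L} I^N(x)$, so that $\Delta = \min\{B,D\}-A$.

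\emph{Second implication.} This needs no assumption on $y_2$. If $D<A$, then $\min\{B,D\}\le D<A$, so $\Delta<0$. Proposition \ref{prop: main proposition}, eq. \eqref{eq: lim 1/n log (1-S) ->0}, then gives $S\to 0$.

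\emph{First implication.} If $D>A$, it suffices to show $B>A$, because then $\min\{B,D\}>A$, i.e. $\Delta>0$, and eq. \eqref{eq: lim 1/n log (S) ->1} gives $S\to 1$. So the whole statement reduces to showing that $y\mapsto I^N(y)+y$ is strictly increasing between $y_1$ and $y_2$ whenever $y_2>\gamma$.

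\emph{Key step.} Let $f(y)\coloneqq I^N(y)+y$ on $[0,H_0)$. By Lemma \ref{lem: main lemma} and Assumption \ref{assumption: ML optimal guessing order assumption}, $f$ is non-decreasing, since it is the exponent of the probability of the noise sequence at rank $2^{\lfloor ny\rfloor}$, and these probabilities are non-increasing in rank. Two facts about $f$ are needed.

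First, $f$ is strictly convex on $(\gamma,H_0)$. This holds because $I^N$ is strictly convex there by Lemma \ref{lem: I continuous}, and adding a linear term preserves strict convexity.

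Second, a non-decreasing function that is strictly convex on an interval cannot be constant on any subinterval of length $>0$ inside it. Otherwise strict convexity fails on that subinterval.

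Now take $y_1<y_2$ with $y_2>\gamma$, and pick $u\coloneqq\max\{y_1,\gamma\}<y_2$. The interval $(u,y_2)$ is nonempty and lies in $(\gamma,H_0)$. Choose $u<a<b<y_2$. Then $f(y_1)\le f(a)\le f(b)\le f(y_2)$ by monotonicity. If $f(y_1)=f(y_2)$, all these values would be equal for every such $a<b$, so $f$ would be constant on $(u,y_2)$, contradicting strict convexity. Hence $B=f(y_2)>f(y_1)=A$.

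\emph{Expected obstacle.} The only delicate point is the case $y_1\le\gamma$. There $I^N$ is affine with $I^N(x)=-x+H_{\min}$, so $f$ is constant on $[0,\gamma]$, and strict increase has to come entirely from the portion $(\gamma,y_2)$. The argument via $u=\max\{y_1,\gamma\}$ handles this. Continuity of $I^N$ on $[0,H_0)$ from Lemma \ref{lem: I continuous} ensures no jump issues at $\gamma$, though monotonicity alone suffices for the chain of inequalities used above.
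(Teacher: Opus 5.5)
Your proof is correct and follows the paper's argument. The second implication is immediate from $\min\{B,D\}\le D<A$. The first reduces to $I^N(y_2)+y_2>I^N(y_1)+y_1$, which the paper also gets from monotonicity of $y\mapsto I^N(y)+y$ together with strict convexity of $I^N$ on $(\gamma,H_0)$ ruling out flat pieces; your choice $u=\max\{y_1,\gamma\}$ makes explicit the case $y_1\le\gamma$ that the paper's one-line remark passes over.
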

By Corollary \ref{cor: if y_1 > gamma, then asymmptotic reliabiblity depends only on y_1,y_L}, whenever $y_2 > \gamma$, whether the first codeword in the list is asymptotically reliable depends only on the first and last ranks. However, the rate of convergence in the case $ 1-R + \inf_{x \geq y_L} I^N(x) > I^N(y_1)+y_1$ still depends on $y_1,y_2$ and $y_L$ through $\Delta$ as defined in Proposition \ref{prop: main proposition}. 

Next, we recover the results of \cite[Theorem 3]{duffy_capacity-achieving_2019}. 
\begin{corollary}[Supercritical Guessing Threshold for $L=1$]\label{cor: L=1 threshold y*}
    Let $L=1$, and assume $H_{\min} < 1-R <H_0$, and $y_1 \in [0,H_0).$ Then there exists a unique $y^*$ characterized by $\Delta(y^*) = 0$ such that $S(2^{ny_1}) \xrightarrow{n \to \infty} 1 $ for $y_1 < y^*$ and $S(2^{ny_1}) \xrightarrow{n \to \infty} 0 $ for $y_1 > y^*$.

    If $R< 1-H$, the below-capacity case, then $y^* =1-R$, and if $R>1-H$, the above-capacity case, then $y^* < 1-R$.
\end{corollary}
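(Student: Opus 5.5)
The plan is to rewrite $\Delta$ as $\Delta(y)=1-R-f(y)$, where
\[
f(y)\coloneqq I^N(y)+y-\inf_{x\ge y}I^N(x),\qquad y\in[0,H_0).
\]
I will show that $f$ is continuous, non-decreasing, constant equal to $H_{\min}$ on $[0,\gamma]$, strictly increasing on $[\gamma,H_0)$, and takes every value of $(H_{\min},H_0)$. The hypothesis $H_{\min}<1-R<H_0$ then gives a unique root $y^*$ of $f(y)=1-R$ by the intermediate value theorem. Moreover $\Delta(y_1)>0$ for $y_1<y^*$ and $\Delta(y_1)<0$ for $y_1>y^*$, so Proposition~\ref{prop: main proposition} yields the two convergence claims. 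Uniqueness is automatic: on $[0,\gamma]$ we have $f=H_{\min}<1-R$, and on $[\gamma,H_0)$ the function $f$ is strictly increasing.

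First I describe $\inf_{x\ge y}I^N(x)$. By Theorem~\ref{theo: G LDP}, $I^N$ is convex and vanishes only at $H$, so it is non-increasing on $[0,H]$ and non-decreasing on $[H,H_0)$. Hence the infimum equals $0$ for $y\le H$ and equals $I^N(y)$ for $y\ge H$. This gives $f(y)=I^N(y)+y$ on $[0,H]$ and $f(y)=y$ on $[H,H_0)$. Continuity of $f$ on $[0,H_0)$ follows from Lemma~\ref{lem: I continuous}, and the two formulas agree at $H$ since $I^N(H)=0$.

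On $[0,\gamma]$, Theorem~\ref{theo: G LDP} gives $I^N(y)+y=H_{\min}$. On $(\gamma,H]$, I use that $I^N$ is convex, equal to the affine function $-x+H_{\min}$ on $[0,\gamma]$, and strictly convex on $(\gamma,H_0)$ by Lemma~\ref{lem: I continuous}. Convexity gives, for $\gamma\le a<b$, that the difference quotient $(I^N(b)-I^N(a))/(b-a)$ is at least the left slope $-1$. Strict convexity makes it strictly greater than $-1$; otherwise $I^N$ would be affine with slope $-1$ on a subinterval of $(\gamma,H_0)$. Therefore $I^N(y)+y$ is strictly increasing on $[\gamma,H]$, and on $[H,H_0)$ we have $f(y)=y$, which is trivially strictly increasing. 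As $y\nearrow H_0$, $f(y)\to H_0$, so together with $f(0)=H_{\min}$ the range of $f$ on $[0,H_0)$ contains $(H_{\min},H_0)$. In the degenerate case $H=H_0$ the limit is $I^N(H_0)+H_0\ge H_0$, which still suffices.

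For the capacity statement there are two cases.
\begin{itemize}
\item If $R<1-H$, then $1-R>H=f(H)$, so the root lies in $(H,H_0)$ where $f(y)=y$. Hence $y^*=1-R$.
\item If $R>1-H$, then $1-R<H=f(H)$, so $y^*<H$. There $f(y^*)=I^N(y^*)+y^*$ with $I^N(y^*)>0$ because $y^*\neq H$. This gives $1-R=f(y^*)>y^*$.
\end{itemize}

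The main obstacle is the strict monotonicity of $y\mapsto I^N(y)+y$ just past $\gamma$, which needs the supporting-slope argument above. I also have to check that the case $\gamma=0$ and the possible kink of $I^N$ at $\gamma$ cause no trouble. Everything else is routine bookkeeping with Proposition~\ref{prop: main proposition}.
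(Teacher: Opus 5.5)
The skeleton of your proof matches the paper's: $\Delta=1-R-f$, the split of $\inf_{x\ge y}I^N(x)$ at $H$, the intermediate value theorem, uniqueness, and both capacity cases are all fine. The gap is the step you yourself call the main obstacle, and you leave it unresolved.

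Your lower bound $-1$ on the chord slopes of $I^N$ to the right of $\gamma$ comes from comparing with the chord of slope $-1$ on $[0,\gamma]$. That argument is correct when $\gamma>0$, but the chord only exists then. If $\gamma=0$, then $[0,\gamma]=\{0\}$ and $I^N=+\infty$ to the left of $0$. By eq.~\eqref{eq: gamma iid noise binary} this is the case for every non-uniform i.i.d.\ noise, including the BSC of Example~\ref{ex: BSC gamma, H_0}.

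The properties you invoke are convexity, $I^N(0)=H_{\min}$, a unique zero at $H$, and strict convexity on $(0,H_0)$. These do not prevent $I^N$ from having slope below $-1$ near $0$. The parabola $H_{\min}(1-y/H)^2$ has all of them, yet its slope at $0$ is $-2H_{\min}/H<-1$ whenever $H_{\min}>H/2$. So strict monotonicity of $I^N(y)+y$ is unproved precisely in the main case of interest. With it go the uniqueness of $y^*$ and the sign pattern of $\Delta$. Saying you ``have to check'' $\gamma=0$ does not close this.

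The missing ingredient is a global affine minorant. Take $\alpha=-1$ in $I^N(x)=\sup_{\alpha}\{x\alpha-\Lambda^N(\alpha)\}$ and use $\Lambda^N(-1)=-H_{\min}$. This gives $I^N(y)+y\ge H_{\min}=I^N(0)$ for all $y$. Hence the convex function $g(y)=I^N(y)+y$ attains its minimum at $0$, and is therefore non-decreasing on $[0,H_0)$: for $0\le a<b$, write $a=(1-t)\cdot 0+tb$ and use $g(0)\le g(b)$. Strict convexity on $(\gamma,H_0)$ then excludes any flat piece. So $g$ is strictly increasing on $[\gamma,H_0)$ for every value of $\gamma$, and the chord-slope argument is no longer needed.

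The paper gets the same monotonicity operationally instead. By eq.~\eqref{eq: limit P(1/n log G equal y)}, $I^N(y)+y$ is the decay rate of the probability of the noise sequence of rank $2^{\lfloor ny\rfloor}$. Under Assumption~\ref{assumption: ML optimal guessing order assumption} that probability is non-increasing in the rank, so $I^N(y)+y$ is non-decreasing, and strict convexity upgrades this to strict increase. With either repair, the rest of your proposal goes through.
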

\begin{proof}
By Theorem \ref{theo: G LDP}, $I^N$ is convex and has a unique root at $y=H$. Therefore, $I^N(y)$ is non-increasing for $y \in [0,H)$ and non-decreasing for $y \in [H,1).$
Therefore, $\inf_{x \geq y} I^N(x) = 0$ for $y \in [0,H]$, and $\inf_{x \geq y} I^N(x)= I^N(y)$ for $y \in (H,1]$. Thus, 
\begin{align}\label{eq: cases lambda(y1)}
    \Delta(y) =  \begin{cases}
    1-R -( I^N(y) + y), & y \in [0,H] \\
     1-R-y, & y \in [H,H_0).
\end{cases} 
\end{align} 
   As $H_{\min} < 1-R <H_0$, we get $\Delta(0) = 1-R - H_{\min} >0$ and $\Delta(y) \searrow 1-R-H_0 <0$ as $y \nearrow H_0.$ 
    The existence of $y^*$ then follows from the intermediate value theorem. 
   
   To show uniqueness, note that by Theorem \ref{theo: G LDP}, $I^N(y)+y= H_{\min}$ for $y \in [0,\gamma]$. So $\Delta(y)= 1-R-H_{\min}>0$ for $y \in [0,\gamma] $ and  it suffices to show that $y \mapsto \Delta(y)$ is strictly decreasing in $(\gamma,H_0).$  
    By eq. \eqref{eq: limit P(1/n log G equal y)} and Assumption \ref{assumption: ML optimal guessing order assumption}, $y \mapsto I^N(y)+y$ is non-decreasing. To show that $I^N(y)+y$ is strictly increasing in $(\gamma,H_0)$, suppose by way of contradiction that $I^N(y)+y$ were increasing, but not strictly increasing. Then  $I^N(y)+y \eqcolon c \in \mathbb{R}$ would be constant on some subinterval of $(\gamma,H_0).$  Then $I^N(y) = c-y$, which would contradict the strict convexity of $I^N(y)$. Therefore, $I^N(y)+y$ is strictly increasing on $(\gamma,H_0)$, which means $\Delta(y)$ is strictly decreasing on $(\gamma, H)$. For $y \in (H,H_0)$, $\Delta(y) = 1-R-y$, which is clearly strictly decreasing. 

    This shows the first part of the statement. For the second part, assume first that $R<1-H \iff H<1-R$. Then eq. \eqref{eq: cases lambda(y1)} gives $\Delta(1-R) = 1-R - (1-R) = 0$. As $y^*$ is the unique root of $\Delta$, it follows that $y^* = 1-R$. Next, assume $R>1-H \iff H>1-R$. Then $1-R \notin [H,H_0)$, so that the unique root of $\Delta$ must lie in $[0,H)$. 
    As $I^N$ has its unique root at $H$ and $y^* \in [0,H)$, $I^N(y^*) >0$, so that 
    $0= \Delta(y^*) = 1-R - (I^N(y^*)+y^*)$, which is equivalent to $y^* = 1-R -I^N(y^*) < 1-R. $
\end{proof}
From the proof of Corollary \ref{cor: L=1 threshold y*} we see that the assumption $H_{\min} < 1-R$ is necessary for any concentration on a correct decoding to occur at all. If $H_{\min} > 1-R$, then $\Delta(0) < 0 $. And as $y \mapsto \Delta(y)$ is non-increasing, it would follow that $\Delta(y) <0$ for all $y \in [0, H_0),$ which in turn would imply that $S(2^{ny}) \to 0$ for all such $y$. From Example \ref{ex: BSC gamma, H_0}, it follows that in the \ac{BSC}, $H_{\min} < 1-R$ is equivalent to $p<1-2^{-(1-R)}$, giving an upper bound on the noise level at which concentration on a correct decoding can occur. For example, if $R=0.7$, as in Fig. \ref{fig:finite n SO}, then $p_{\text{max}} \coloneqq 1-2^{-(1-0.7)}  \approx 0.1877$. On the other hand, the unique $p_{C} \in (0,1/2)$ at which $R = 1-H$ is $p_C \approx 0.0532$. So for $R=0.7$, $p<0.0532$ means the channel is below capacity, and $0.0532 <p<0.1877$ means the channel operates above capacity, but an early likelihood rank still guarantees concentration on a correct decoding. Only if $p>0.1877$, any decoding is likely the wrong one regardless of its corresponding likelihood rank.

We now move to the case $L=2$ and show that the second likelihood rank $Q_2$ carries valuable reliability information unobtainable from the first rank alone.
\begin{corollary}[The possible $L=2$ reliability gain]\label{cor: L=2 corollary}
    Let $L=2$, $H_{\min} < 1-R < H_0$,  and $y_1,y_2 \in [0,H_0)$ such that $y_1 < y_2$ and let $\gamma$ be as in Theorem \ref{theo: G LDP}. Let $y^*$ be defined as in Corollary \ref{cor: L=1 threshold y*}.
 \begin{enumerate}
     \item            Then, if $y_1 < y^*$, it holds that $S(2^{ny_1},2^{ny_2}) \xrightarrow{n \to \infty} 1 $ for all $y_2 \in (\max \{ y_1, \gamma\},H_0)$.
     \item    Assume in addition that   $H < H_0$. Set 
     \begin{align*}
         &\bar{y} \coloneqq \\
         &\sup\bigg\{  y \in [0, H_0) \;|\; y+I^N(y) < 1-R + I^N(H_0) \bigg\}.
     \end{align*}
     Then $\bar{y}>y^*$ and for all $y_1 \in (y^*,\bar{y})$ there exists a unique $\delta >0$, such that $y_1 + \delta \in (H,H_0)$, $\Delta(y_1, y_1+\delta)=0$, and for all $y_2 \in (y_1,y_1+\delta)$ it holds that $S(2^{ny_1},2^{ny_2}) \xrightarrow{n \to \infty} 0 $ and for all $y_2 \in (y_1+\delta,H_0)$ it holds that $S(2^{ny_1},2^{ny_2}) \xrightarrow{n \to \infty} 1 $. 
     \item            Furthermore, if $I^N(H_0) = \infty,$ then $\bar{y} = H_0$, but if $I^N(H_0) < \infty,$ then $\bar{y}<H_0$ and it holds for all $y_1 \in (\bar{y},H_0)$ and for all $y_2 \in (y_1, H_0)$ that $S(2^{ny_1},2^{ny_2}) \xrightarrow{n \to \infty} 0 $. 
 \end{enumerate}
 All of the above convergences have exponential rates as specified in Proposition \ref{prop: main proposition}.
\end{corollary}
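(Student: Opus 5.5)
The plan is to reduce everything to the sign of the $L=2$ decision function from Proposition \ref{prop: main proposition},
\begin{align*}
\Delta(y_1,y_2)=\min\{I^N(y_2)+y_2,\;1-R+J(y_2)\}-(I^N(y_1)+y_1),
\end{align*}
with $J(y)\coloneqq\inf_{x\ge y}I^N(x)$. We then use the monotonicity facts established in the proof of Corollary \ref{cor: L=1 threshold y*}. Write $F(y)\coloneqq I^N(y)+y$. Then $F$ is non-decreasing, constant equal to $H_{\min}$ on $[0,\gamma]$, and strictly increasing on $(\gamma,H_0)$. Moreover $J(y)=0$ for $y\le H$ and $J(y)=I^N(y)$ for $y\in(H,H_0)$. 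By Lemma \ref{lem: I continuous}, $I^N$ is continuous on $[0,H_0)$ and strictly convex on $(\gamma,H_0)$, so it is strictly increasing on $(H,H_0)$. Finally, Lemma \ref{lem: I(H_0)} gives $I^N(y)\nearrow I^N(H_0)$ as $y\nearrow H_0$.

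\textbf{Part 1.} Take $y_1<y^*$ and $y_2>\max\{y_1,\gamma\}$. Strict monotonicity of $F$ above $\gamma$, together with monotonicity below it, gives $F(y_2)>F(y_1)$. Since $J$ is non-decreasing on $[H,H_0)$ and zero before, $J(y_2)\ge J(y_1)$. Hence
\begin{align*}
1-R+J(y_2)\ge 1-R+J(y_1)>F(y_1),
\end{align*}
where the last inequality is $\Delta(y_1)>0$, which holds for $y_1<y^*$ by Corollary \ref{cor: L=1 threshold y*}. Therefore $\Delta(y_1,y_2)>0$, and Proposition \ref{prop: main proposition} gives $S\to1$.

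\textbf{Part 2.} Assume $H<H_0$ and define $g(y_2)\coloneqq 1-R+J(y_2)$. The first step is $\bar y>y^*$. Because $\Delta(y^*)=0$, we have $F(y^*)=1-R+J(y^*)$. Since $y^*\le H$ in the above-capacity case we get $J(y^*)=0$; in the below-capacity case $y^*=1-R>H$, and then $J(y^*)=I^N(y^*)$. In either case $F(y^*)<1-R+I^N(H_0)$, because $I^N(H_0)>I^N(y)$ for every $y<H_0$ (strict increase on $(H,H_0)$ plus the limit), and $I^N(H_0)>0$. Continuity of $F$ then pushes the supremum $\bar y$ strictly past $y^*$.

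Now fix $y_1\in(y^*,\bar y)$, so $F(y_1)>1-R+J(y_1)$. As $y_2$ increases from $y_1$, $F(y_2)>F(y_1)$, so the minimum in $\Delta$ has sign determined by $g(y_2)-F(y_1)$.
\begin{itemize}
\item For $y_2\le H$, $g(y_2)=1-R$. Here $J(y_1)=0$ if $y_1\le H$, so $g(y_2)=1-R<F(y_1)$. (If $y_1>H$, the region $y_2\le H$ is empty.)
\item On $(H,H_0)$, $g$ is continuous and strictly increasing, from $1-R$ (or at least from $g(y_1)<F(y_1)$) to the limit $1-R+I^N(H_0)$.
\item Since $F(y_1)<1-R+I^N(H_0)$ by the definition of $\bar y$ (using monotonicity of $F$ to pass from ``some $y>y_1$'' to $y_1$), the intermediate value theorem gives a unique crossing point $y_1+\delta\in(H,H_0)$.
\end{itemize}
The sign of $\Delta$ below and above this crossing then gives $S\to0$ and $S\to1$ respectively.

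\textbf{Part 3.} If $I^N(H_0)=\infty$, the defining inequality of $\bar y$ holds for all $y<H_0$, so $\bar y=H_0$. If $I^N(H_0)<\infty$, note that $F(y)\to H_0+I^N(H_0)$ as $y\nearrow H_0$, and this exceeds $1-R+I^N(H_0)$ because $1-R<H_0$. Hence the defining inequality fails near $H_0$ and $\bar y<H_0$. For $y_1>\bar y$, monotonicity gives $F(y_1)\ge 1-R+I^N(H_0)>g(y_2)$ for every $y_2<H_0$, so $\Delta<0$ and $S\to0$.

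The main obstacle is the bookkeeping in Part 2. One must check that the crossing lies strictly in $(H,H_0)$ and is unique, which requires the strict monotonicity of $I^N$ on $(H,H_0)$. One must also handle the boundary/supremum subtleties in the definition of $\bar y$ (strict versus non-strict inequality at $y_1$). The exponential rates follow directly from Proposition \ref{prop: main proposition}.
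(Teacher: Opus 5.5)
Your proposal is correct and follows the paper's proof essentially step for step. You reduce everything to the sign of $\Delta(y_1,y_2)$ and use strict monotonicity of $y\mapsto I^N(y)+y$ on $(\gamma,H_0)$, so that only the comparison of $1-R+\inf_{x\ge y_2}I^N(x)$ with $I^N(y_1)+y_1$ matters; you then apply the intermediate value theorem to the continuous, strictly increasing function $1-R+I^N(y_2)$ on $(H,H_0)$. Two small fixes: the claim $I^N(H_0)>I^N(y)$ should be restricted to $y\in(H,H_0)$, since it can fail for $y<H$ (e.g.\ the BSC with $p=0.3$) and you only need it there together with $I^N(H_0)>0$; and in Part 2 you should state explicitly that $y^*>\gamma$ (because $\Delta\equiv 1-R-H_{\min}>0$ on $[0,\gamma]$), which is what guarantees $I^N(y_2)+y_2>I^N(y_1)+y_1$.
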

\begin{proof}
    Owing to Lemma \ref{lem: I(H_0)}, we use $I^N(H_0)$ and $ \lim_{y \nearrow H_0} I^N(y)$ interchangeably throughout the proof.
    
    To show the first part, let $y_1 < y^*$. By Corollary \ref{cor: L=1 threshold y*}, it follows that $\Delta(y_1)>0$, i.e., $1-R + \inf_{x \geq y_1} I^N(x) > I^N(y_1) +y_1.$ As $\inf_{x \geq y_2} I^N(x) \geq \inf_{x \geq y_1} I^N(x),$ it follows that also $1-R + \inf_{x \geq y_2} I^N(x) > I^N(y_1) +y_1.$ So in order to show $\Delta(y_1,y_2)>0$ for $y_2 \in (\max \{ y_1, \gamma\},H_0)$, it suffices to show that $I^N(y_2) + y_2 > I^N(y_1) + y_1$ for such $y_2$. But this is true, as we know from the proof of Corollary \ref{cor: L=1 threshold y*} that $y \mapsto I^N(y)+y$ is strictly increasing in $(\gamma,H_0)$.

    To show the second and third parts,
    note first that since $H<H_0$,  $\inf_{x \geq y}I^N(x) = I^N(y)$ for $y\geq H$, so that $\lim_{y \nearrow H_0} \inf_{x \geq y}I^N(x) = \lim_{y \nearrow H_0} I^N(y) = I^N(H_0).$ 

    As  $H<H_0$ and $I^N(y)$ is strictly increasing for $y \in (H,H_0)$, we have that $I^N(y^*)+y^* = 1-R +\inf_{x \geq y^*}I^N(x)<1-R+ \lim_{y \nearrow H_0} \inf_{x \geq y}I^N(x)$.  This shows $\bar{y}>y^*.$  From the proof of Corollary \ref{cor: L=1 threshold y*} we know that  $\Delta(\gamma) >0 $, so in particular $ y^* > \gamma$. As $y \mapsto I^N(y)+y$ is strictly increasing in $(\gamma,H_0)$, it follows that $ y_2 + I^N(y_2) > y_1 + I^N(y_1) $ for all $y_2 >y_1>y^*$ and  for such $y_1, y_2$, the sign of $\Delta (y_1,y_2)$ is entirely determined by whether or not  $1-R +   \inf_{x \geq y_2}I^N(x)  > y_1 + I^N(y_1)$.

    Let $y_1 \in (y^*,\bar{y}).$ Then in particular, $y_1$ satisfies 
    \begin{align}
      1-R +   \inf_{x \geq y_1}I^N(x)  &< y_1 + I^N(y_1) \label{eq: condition y_1}\\
      &< 1-R + \lim_{y \nearrow H_0} \inf_{x \geq y}I^N(x). \label{eq: condition y_1 2}
    \end{align}
       As $\inf_{x \geq y} I^N(x) = 0$ for $y \in [0,H]$, and $\inf_{x \geq y} I^N(x)= I^N(y)$ for $y \in (H,1]$,  the left-hand side of eq. \eqref{eq: condition y_1} is flat and equals $1-R$ for $y_1 \in [0,H]$ and is strictly increasing and continuous thereafter, and converges to the right-hand side  of eq. \eqref{eq: condition y_1 2} as $y_1 \nearrow H_0$. By the intermediate value theorem, there exists a unique $\delta >0$, such that $\delta +y_1 > H$, and 
$ 1-R +   \inf_{x \geq y_2}I^N(x)  < y_1 + I^N(y_1)$ for all $y_2 \in (y_1,y_1 + \delta)$, and $ 1-R +   \inf_{x \geq y_2}I^N(x)  > y_1 + I^N(y_1)$ for all $y_2 \in (y_1 + \delta, H_0)$. Consequently, it holds for all $y_2 \in (y_1,y_1 + \delta)$ that $\Delta(y_1, y_2) <0$, and for all $y_2 \in (y_1 + \delta, H_0)$ that 
     $\Delta(y_1, y_2) >0$. This shows the second part.

    To show the third part, note that clearly $\bar{y} = H_0$ whenever $I^N(H_0) = \infty$.
    So let $I^N(H_0) < \infty$. Then 
    $y+ I^N(y) \nearrow H_0 + I^N(H_0) > 1-R + I^N(H_0)$ as $y \nearrow H_0$, which shows $\bar{y}< H_0$.
    Let $y_1 \in (\bar{y},H_0)$. Then 
    \begin{align*}
        y_1 + I^N(y_1) 
      &> 1-R + \lim_{y \nearrow H_0} \inf_{x \geq y}I^N(x) \\
      &\geq 1-R + \inf_{x \geq y_2}I^N(x)
    \end{align*}
  for all $y_2 \in (y_1,H_0).$  This implies $\Delta(y_1,y_2)<0$ for those $y_1,y_2$.
\end{proof}
The results of Corollary \ref{cor: L=2 corollary} are illustrated for the \ac{BSC} in Fig. \ref{fig:phase_diagram}.  By Example \ref{ex: BSC gamma, H_0} we know that in the \ac{BSC}, $I^N(H_0)< \infty$, so that $\bar{y}<H_0$ and, as predicted by Corollary \ref{cor: L=2 corollary}, we obtain the three intervals $[0,y^*)$, $(y^*, \bar{y})$ and $(\bar{y},1)$.  If $y_1$, the normalized logarithm of the first likelihood rank, lies in $[0,y^*)$, then the position of the second rank is irrelevant and concentration of the first list entry on a correct decoding is guaranteed. If $y_1 \in (y^*, \bar{y})$, then a sufficiently late $y_2$ rescues confidence in the decoding. This is in contrast to the $L=1$ case, where $y_1 > y^*$ implies concentration on a wrong decoding. However, even in the $L=2$ case, a sufficiently late first rank,  $y_1 \in (\bar{y},1)$, again implies that the position of the second rank is irrelevant and concentration on a wrong decoding is guaranteed. 

We will now study the explicit relationship between $p$, the crossover probability of the \ac{BSC}, the code rate $R$, and the objects introduced in the theory developed above. Our first goal is to obtain a closed-form expression for $I^N(y)+y$, which repeatedly appears throughout Proposition \ref{prop: main proposition} and Corollaries \ref{cor: L=1 threshold y*} and \ref{cor: L=2 corollary}. 

As the \ac{BSC} is memoryless, apart from the characterization of the guesswork rate function $I^N$ given in \cite{christiansen2013_guesswork}, which is based on the Legendre-Fenchel transform of the scaled cumulant generating function of $1/n \log G(N^n)$,  one can use tilted distributions and typical-set arguments to obtain an alternative characterization of $I^N$  \cite{Beirami2019}. Let $\boldsymbol{\mu}$ be the distribution on $\fb_2$ associated with the \ac{BSC}, i.e., $\boldsymbol{\mu}(1)= p$ and $\boldsymbol{\mu}(0)= 1-p$. For tilt parameter $\alpha \in \mathbb{R},$ let $T(\boldsymbol{\mu},\alpha)$ be the tilted $\boldsymbol{\mu}$  of order $\alpha$, given by 
\begin{align*}
   T(\boldsymbol{\mu},\alpha)(1)= \frac{p^{\alpha}}{p^{\alpha} + (1-p)^{\alpha}}, \; T(\boldsymbol{\mu},\alpha)(0)= \frac{(1-p)^{\alpha}}{p^{\alpha} + (1-p)^{\alpha}}.
\end{align*}
Then \cite[Theorem 5]{Beirami2019} yields 
\begin{align*}
    I^N(y) = D(T(\boldsymbol{\mu},\alpha(y)) \Vert \boldsymbol{\mu}),
\end{align*}
where 
\begin{align}\label{eq: alpha(y) definition}
    \alpha (y) = \arg_{\alpha >0} \{ H( T(\boldsymbol{\mu},\alpha))=y \}.
\end{align}
Here, for distributions $\boldsymbol{\rho}_1, \boldsymbol{\rho}_2$ on $\fb_2,$  $H(\boldsymbol{\rho}_1) = - \boldsymbol{\rho}_1(0)\log \boldsymbol{\rho}_1(0) - \boldsymbol{\rho}_1(1) \log \boldsymbol{\rho}_1(1)  $ is the Shannon entropy, and $D (\boldsymbol{\rho}_1 \Vert \boldsymbol{\rho}_2) = \boldsymbol{\rho}_1(0) \log(\boldsymbol{\rho}_1(0)/\boldsymbol{\rho}_2(0))+ \boldsymbol{\rho}_1(1) \log(\boldsymbol{\rho}_1(1)/\boldsymbol{\rho}_2(1))$ is the relative entropy. Let
\begin{align*}
     h \colon [0,1] &\to [0,1]\\
     y &\mapsto  - y \log(y)- (1-y) \log (1-y)
 \end{align*}
 be the binary entropy function, extended to $y=0$ and $y=1$ by continuity. With this notation, we obtain $H( T(\boldsymbol{\mu},\alpha)) = h \left( \frac{p^{\alpha}}{p^{\alpha} + (1-p)^{\alpha}} \right).$ And since $\alpha \mapsto \frac{p^{\alpha}}{p^{\alpha} + (1-p)^{\alpha}}$ is a bijection from $(0, \infty)$ to $(0, 1/2),$ and $h_{\vert_{[0,1/2]}}$ is also a bijection, 
we obtain that the unique solution to eq. \eqref{eq: alpha(y) definition} satisfies $   \frac{p^{\alpha(y)}}{p^{\alpha(y)} + (1-p)^{\alpha(y)}} = h^{-1}(y)$, where $h^{-1}$ is the inverse of $h_{\vert_{[0,1/2]}}$. Let  
 \begin{align*}
     d(w_1 \Vert w_2) = w_1 \log \left( \frac{w_1}{w_2} \right) + (1- w_1) \log \left( \frac{1-w_1}{1-w_2} \right)
 \end{align*}
 be the binary relative entropy function for $0<w_1,w_2 <1$. Then
$ D(T(\boldsymbol{\mu},\alpha) \Vert \boldsymbol{\mu}) = d\left( \frac{p^{\alpha}}{p^{\alpha} + (1-p)^{\alpha}} \Vert p\right)$ and we obtain that $I^N(y) = d\left( h^{-1} (y)\Vert p\right)$. Further simplification yields the following result.
\begin{example}[\acs{BSC} continued]
    In the \ac{BSC} with $0<p<\frac{1}{2}$, let $h^{-1} \colon [0,1] \rightarrow [0, \frac{1}{2}]$ be the inverse of $h_{\vert_{[0,1/2]}}$. Then for all $y \in [0,1]$ it holds that
     \begin{align*}
          I^N(y) + y &=  h(h^{-1}(y) \Vert p) \\
          &\coloneqq - h^{-1}(y) \log(p) - \left(1- h^{-1}(y)\right) \log(1-p),
      \end{align*}
      where $h(h^{-1}(y) \Vert p)$ is the binary cross-entropy of $h^{-1}(y)$ relative to $p$. 
      
     Following eq. \eqref{eq: cases lambda(y1)}, we obtain that for $L=1$, 
      \begin{align*}
    \Delta(y) =  \begin{cases}
    1-R -h(h^{-1}(y) \Vert p), & y \in [0,h(p)] \\
     1-R-y, & y \in [h(p),1),
\end{cases} 
\end{align*} 
and similarly for $L>1$
\begin{align*}
  &\Delta(y_1,\dots,y_L)\\
  &=\begin{cases}
      \begin{aligned}[t]
        \min\bigl\{\,& h(h^{-1}(y_2) \Vert p),
                     1-R\bigr\} \\
                     &-h(h^{-1}(y_1) \Vert p),
      \end{aligned}
      & y_L \in[0,h(p)],\\[3ex]
      \begin{aligned}[t]
        \min\bigl\{\,& h(h^{-1}(y_2) \Vert p),\\
                     & 1-R+ d\left( h^{-1} (y_L)\Vert p\right)\bigr\}\\
                     &\;-h(h^{-1}(y_1) \Vert p),
      \end{aligned}
      & y_L\in[h(p),1).
    \end{cases}
\end{align*}
In the below-capacity case $R < 1-h(p)$, it follows that $y^* = 1-R$, and in the above-capacity case  $R > 1-h(p)$, we obtain 
\begin{align}\label{eq: y^* in the BSC}
    y^* = h \left( \frac{1-R + \log(1-p)}{ \log \left(  \frac{1-p}{p}\right)}\right).
\end{align}
Notice that $H_{\min} < 1-R$ guarantees $y^* >0$. $H=h(p) < H_0=1$ is satisfied as $0<p<\frac{1}{2}$. Also $I^N(H_0)= - \log\left(2 \sqrt{p(1-p)}\right)< \infty$, so that we obtain 
\begin{align}\label{eq: y_bar in the BSC}
    \bar{y} = h \left(  \frac{1}{2} - \frac{R}{ \log \left( \frac{1-p}{p} \right)}\right).
\end{align}
\end{example}
With this, we can precisely quantify the numerical values for $y^*$ and $ \bar{y}$, given in Fig. \ref{fig:phase_diagram}, where $R=0.7$ and $p=0.1$, as 
\begin{align*}
    y^* = h \left( \frac{0.3 + \log(0.9)}{ \log \left(  9\right)}\right) \approx 0.272
\end{align*}
and 
\begin{align*}
    \bar{y} = h \left(  \frac{1}{2} - \frac{0.7}{ \log \left( 9 \right)}\right) \approx 0.854.
\end{align*}
To close this section, we remark that in the \ac{BSC}, Assumption \ref{assumption: ML optimal guessing order assumption} corresponds to an ordering by increasing Hamming weight, breaking ties arbitrarily. Therefore, it is natural to expect that the likelihood-rank thresholds $y^*$ and $\bar{y}$ have corresponding Hamming weight equivalents. To identify them, notice that, using elementary estimates linking the volume of a Hamming ball to the binary entropy function, cf. \cite[Example 11.1.3]{Cover2006},  one can show that for all $y \in  [0,1)$ it holds that
    \begin{align*}
        \lim_{n \to \infty} h \left(\frac{\wt (z^{n,2^{\lfloor n y \rfloor}})}{n} \right) = y \text{ and } \frac{\wt (z^{n,2^{\lfloor n y \rfloor}})}{n} \leq \frac{1}{2}
    \end{align*}
    for all $n \in \mathbb{N}.$
Comparing this with eq. \eqref{eq: y^* in the BSC} and eq. \eqref{eq: y_bar in the BSC} shows that, in the above-capacity regime $R> 1-h(p)$, 
\begin{align*}
    w^* \coloneqq \frac{1-R + \log(1-p)}{ \log \left(  \frac{1-p}{p}\right)} \text{ and } \bar{w} \coloneqq  \frac{1}{2} - \frac{R}{ \log \left( \frac{1-p}{p} \right)}
\end{align*}
represent normalized Hamming weight thresholds. Below capacity, $R< 1-h(p)$,  $\bar{w}$ stays the same and $w^* = h^{-1}(1-R).$ It follows that  $ \lim_{n \to \infty}  \wt (z^{n,2^{\lfloor n y_1 \rfloor}})/n =  h^{-1}(y_1) \eqqcolon w_1$. As $h_{\vert_{[0,1/2]}}$ is strictly increasing, so is $h^{-1}$ and it holds that $w_1 < w^*$ if and only if $y_1 < y^*$,  $ w^* <w_1 < \bar{w}$ if and only if  
$ y^* <y_1 < \bar{y}$, and $w_1> \bar{w}$ if and only if $y_1 > \bar{y}$.
Therefore, Corollary \ref{cor: L=2 corollary} yields that if $w_1 < w^*$, concentration on a correct decoding is guaranteed. If $ w^* <w_1 < \bar{w}$, the rescue phenomenon can occur, but if $w_1> \bar{w}$, the decoding will concentrate on a wrong result. In particular, notice that in the above-capacity regime, the size of the rescue window
\begin{align*}
    \bar{w} - w^* = \frac{1}{2}  - \frac{1 + \log(1-p)}{ \log \left(  \frac{1-p}{p}\right)}
\end{align*}
is independent of the code rate $R$.

\section{Binary Linear Codes}\label{sec: Binary Linear Codes}
\begin{figure*}[t]
    \centering
    \includegraphics[width=\textwidth]{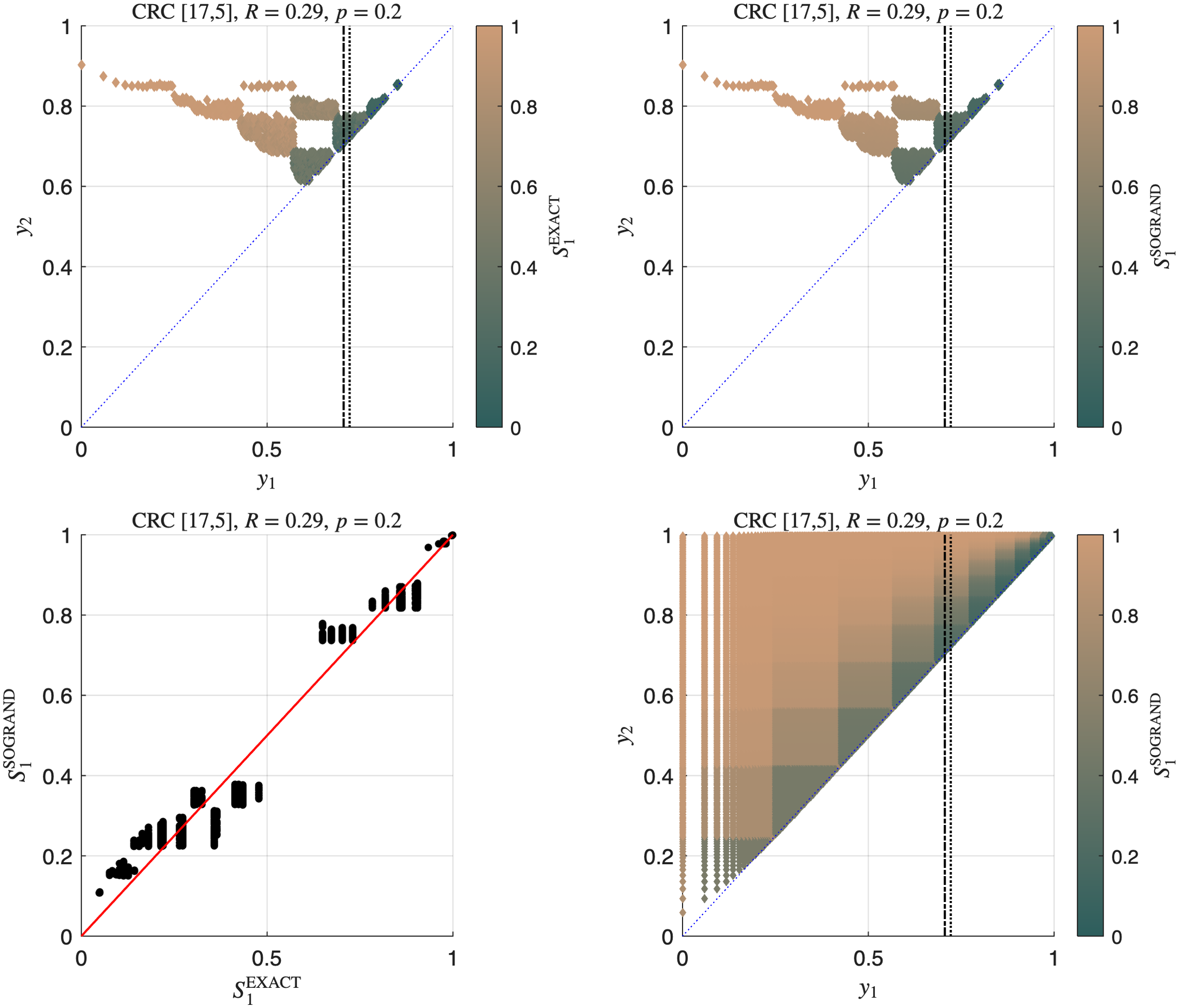}
    \caption{For a CRC[17,5] code in the \ac{BSC}, we show a comparison between the exact $S_1^{\text{EXACT}}(s) \coloneqq P(X^n =  \widehat{C}^{n,1} \;|\; Q_1 = q_1 (s),  Q_2 = q_2(s))$ as specified by eq. \eqref{eq: SO fraction, binary linear code} in Lemma \ref{lem: structured code SO} and the \ac{SOGRAND} approximation $S_1^{\text{SOGRAND}}(s)  $ as given by eq. \eqref{eq: SO fraction 2}. The parameters are $R=0.29$ and $p=0.2$. The top left shows $S_1^{\text{EXACT}}$  evaluated for all possible pairs $(q_1(s),q_2(s))$ as specified in Lemma \ref{lem: syndrome HN^n determines Q_i}. The top right shows $S_1^{\text{SOGRAND}}(s)$ for the same pairs.  The bottom left shows a scatter plot comparing $S_1^{\text{EXACT}}(s)$ to $S_1^{\text{SOGRAND}}(s)$ for  all $(q_1(s),q_2(s))$ pairs, and the bottom right shows $S_1^{\text{SOGRAND}}$ evaluated for all pairs $(q_1,q_2) \in [2^n]^2$, $q_1 < q_2$, even those that the linear structure of the code forbids. The x- and y-axes are scaled as $1/n  \log$ to facilitate comparison with Fig. \ref{fig:finite n SO}. }
    \label{fig:SO comparison CRC}
\end{figure*}

\begin{figure*}[t]
    \centering
    \includegraphics[width=\textwidth]{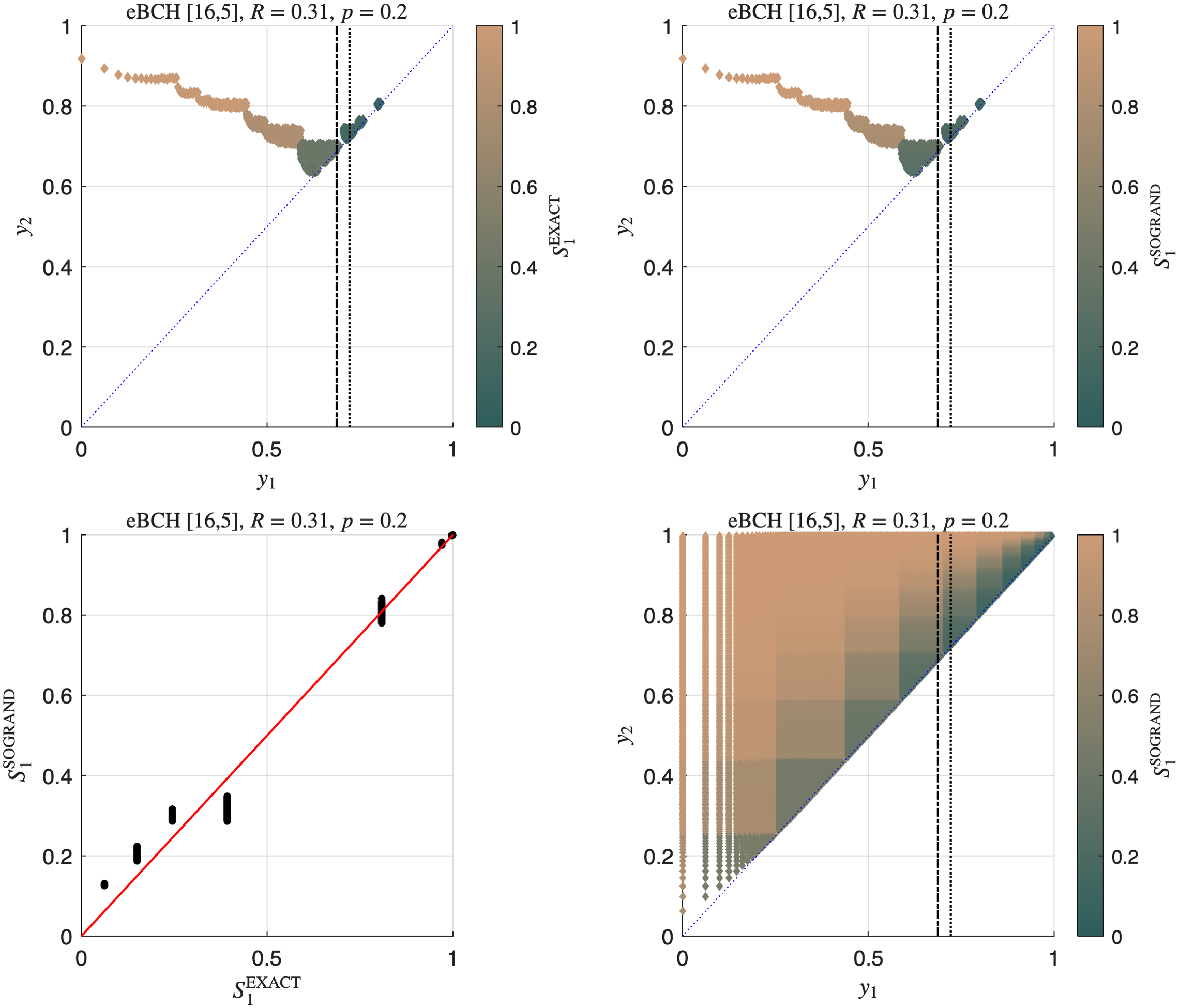}
    \caption{ For an eBCH[16,5] code in the \ac{BSC}, we show an analogous plot to Fig. \ref{fig:SO comparison CRC}. The parameters are $R=0.31$ and $p=0.2$. }
    \label{fig:SO comparison eBCH}
\end{figure*}
For small $k$, where the blockwise \ac{SO} from eq. \eqref{eq: definition blockwise so} can be evaluated through complete enumeration, we now illustrate that the phenomenon identified in the previous sections remains true for binary linear codes. Unless specified otherwise, we again assume that $G$ is any bijection, so that our finite-blocklength results again hold even for approximations of \ac{ML} decoding. To do that, we assume that the codebook $\mathcal{C}_n$ is no longer random, but instead a binary linear code, cf. \cite[Chapter 3]{lin_error_2004}. In particular, let $\mathcal{C}_n$ be a $k$-dimensional subspace of $\fb_2^n$ that is characterized as the kernel of its parity-check matrix $\mathbf{H} \in \fb_2^{(n-k) \times n}$, where $\rank \mathbf{H} = n-k$. Then 
\begin{align}\label{eq: definition linear code via parity check matrix}
    \mathcal{C}_n \coloneqq \ker \mathbf{H} = \{ c^n \in \fb_2^n \; | \; \mathbf{H}c^n = 0 \}.
\end{align}

We will first show that owing to the linearity of the code $\mathcal{C}_n$, the likelihood ranks $Q_1, \dots, Q_{L}$ do not depend on the transmitted codeword $X^n$ anymore. They depend only on the noise $N^n$ through its syndrome $\mathbf{H} N^n$. Even stronger, we show that once a noise realization $N^n = z^n$ is fixed, the entire sequence $Q_1, \dots, Q_{2^k}$ is a deterministic function of the syndrome $\mathbf{H}z^{n}$, regardless of whether in practice a list decoder stops at list size $L<2^k$.
\begin{lemma}\label{lem: syndrome HN^n determines Q_i}
    Let $\mathcal{C}_n$ be a binary linear code with parity-check matrix $\mathbf{H}$.  Then the ranks $Q_j$, $j \in [2^k]$, of the noise candidates corresponding to the list entries do not depend on the sent codeword $X^n$, but depend only on the noise $N^n$. In particular, they depend on $N^n$ only through its syndrome $\mathbf{H}N^n$  and can be written as $Q_j = q_j(\mathbf{H}N^n)$ with deterministic functions $q_j \colon \fb_2^{n-k} \rightarrow [2^n]  $, $j \in [2^k]$, that are defined for all $s \in \fb_2^{n-k}$ as 
    \begin{align}\label{eq: q_1(s) function}
        q_1(s)= \min \{ G(z^n) \; | \; z^n \in \fb_2^n \text{ s.t. } \mathbf{H}z^n = s \}
    \end{align}
    and 
        \begin{align}
        q_j(s)= \min \{ G(z^n) \; | \; z^n \in \fb_2^n \text{ s.t. } \mathbf{H}z^n = s, \nonumber
        \qquad \\ G(z^n) \notin \{ q_1(s), \dots, q_{j-1}(s) \} \} \label{eq: q_i(s) function},
    \end{align}
    for $j =2, \dots, 2^k.$
\end{lemma}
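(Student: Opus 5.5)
The proof rests on a single observation: over a linear code, the set of candidate noise sequences $\{Y^n - C^n \mid C^n \in \mathcal{C}_n\}$ is exactly the coset $N^n + \mathcal{C}_n$, which equals $\{z^n \in \fb_2^n \mid \mathbf{H}z^n = \mathbf{H}N^n\}$. The plan is to establish this identity first, and then show that Definition \ref{defi: decoder} reduces to ordering the coset elements by $G$-rank.

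First, since $Y^n = X^n + N^n$ with $X^n \in \mathcal{C}_n$, closure of $\mathcal{C}_n$ under addition gives $Y^n - C^n = N^n + (X^n - C^n)$. As $C^n$ ranges over $\mathcal{C}_n$, the difference $X^n - C^n$ ranges bijectively over $\mathcal{C}_n$. Hence the map $C^n \mapsto Y^n - C^n$ is a bijection from $\mathcal{C}_n$ onto the coset $N^n + \mathcal{C}_n$, and this coset does not depend on $X^n$. Using eq. \eqref{eq: definition linear code via parity check matrix}, $z^n \in N^n + \mathcal{C}_n$ holds if and only if $\mathbf{H}(z^n - N^n) = 0$, that is, $\mathbf{H}z^n = \mathbf{H}N^n$. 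So the candidate set equals $\{z^n \mid \mathbf{H}z^n = s\}$ with $s = \mathbf{H}N^n$.

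Second, I would prove by induction on $j$ that $Q_j = q_j(\mathbf{H}N^n)$. By eq. \eqref{eq: second1 definition Q_j}, $Q_1$ is the minimum of $G$ over the candidate set, and by the first step this is $q_1(s)$ in eq. \eqref{eq: q_1(s) function}. For the inductive step, assume $Q_i = q_i(s)$ for all $i < j$. Because the correspondence $C^n \mapsto Y^n - C^n$ is a bijection and $G$ is a bijection, removing $\widehat{C}^{n,1}, \dots, \widehat{C}^{n,j-1}$ from the codebook is the same as removing from the coset exactly those $z^n$ with $G(z^n) \in \{Q_1, \dots, Q_{j-1}\} = \{q_1(s), \dots, q_{j-1}(s)\}$. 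Then eq. \eqref{eq: second2 definition Q_j} turns into eq. \eqref{eq: q_i(s) function}. This argument applies for every $j \le 2^k$, since the coset has $2^k$ elements, so each minimum is over a nonempty set.

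The only delicate point is this translation in the inductive step: the recursion in Definition \ref{defi: decoder} excludes codewords, while eq. \eqref{eq: q_i(s) function} excludes ranks. Injectivity of both maps makes the two exclusions coincide. Once that is checked, the $q_j$ are deterministic functions of $s$ alone, which proves the claim that the ranks depend neither on $X^n$ nor on $N^n$ beyond its syndrome.
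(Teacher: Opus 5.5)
Your proof is correct and follows essentially the same route as the paper: you identify the set of candidate noise sequences with the coset $N^n + \mathcal{C}_n$, show this coset equals $\{z^n \mid \mathbf{H}z^n = \mathbf{H}N^n\}$, and read off each $Q_j$ as the $j$-th smallest $G$-rank on it. Your explicit induction, which uses injectivity of $C^n \mapsto Y^n - C^n$ and of $G$ to turn the exclusion of codewords into the exclusion of ranks, simply spells out the step the paper covers with ``analogously.''
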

\begin{proof}
    The definitions of the $Q_j$ remain as in eq. \eqref{eq: second1 definition Q_j} and eq. \eqref{eq: second2 definition Q_j}. However, owing to the linearity of $\mathcal{C}_n$ we obtain
 \begin{align*}
    Q_1 &= \min_{C^n \in \mathcal{C}_n} G(Y^n - C^n ) = \min_{C^n \in \mathcal{C}_n} G((X^n-C^n)+N^n) \\
    &=  \min_{C^n \in \mathcal{C}_n} G(C^n+N^n ),
\end{align*}
where we used that as $C^n$ ranges over $\mathcal{C}_n$,  $X^n-C^n$ also ranges over $\mathcal{C}_n$. This shows that $Q_1$ is the minimum value $G$ takes on the coset $N^n + \mathcal{C}_n \coloneqq \{ N^n + C^n \; | \; C^n \in \mathcal{C}_n \}$. Analogously, $Q_j$ is the $j$-th smallest value $G$ takes on $N^n + \mathcal{C}_n$. 

To complete the proof, note that 
\begin{align}\label{eq: coset N+ C_n syndrome characterization}
    N^n + \mathcal{C}_n = \left\{ z^n \in \fb_2^n \;|\;  \mathbf{H}z^n = \mathbf{H} N^n  \right\}.
\end{align}
To see why this is true, first let $z^n = N^n + C^n \in N^n + \mathcal{C}_n.$ By eq. \eqref{eq: definition linear code via parity check matrix}, it follows that $\mathbf{H}z^n = \mathbf{H}N^n + \mathbf{H}C^n = \mathbf{H}N^n.$ For the other inclusion, let $z^n \in \fb_2^n $ be such that  $\mathbf{H}z^n = \mathbf{H} N^n$. Define $C^n \coloneqq z^n + N^n.$ Then it follows by linearity, and since in $\fb_2^n$, the sum of any two equal vectors is $0$, that $\mathbf{H}C^n = 0$. So by eq. \eqref{eq: definition linear code via parity check matrix}, $C^n \in \mathcal{C}_n$. Therefore, $z^n = C^n + N^n \in N^n + \mathcal{C}_n,$ which shows eq. \eqref{eq: coset N+ C_n syndrome characterization}. 
\end{proof}
Lemma \ref{lem: syndrome HN^n determines Q_i} also implies that only those ranks $q_1, \dots, q_L \in [2^n]$ are possible which can be represented by eqs. \eqref{eq: q_1(s) function}, \eqref{eq: q_i(s) function}.
By assumption, $\rank  \mathbf{H} = n-k$, so that every $s \in \fb_2^{n-k}$ is a valid syndrome, i.e., for every $s \in \fb_2^{n-k}$ there exists a $z^n \in \fb_2^n$ such that $s =\mathbf{H} z^n. $ 
The possible ranks are therefore 
\begin{align}\label{eq: all possible q_1, q_2 pairs}
     \{ (q_1 (s), \dots ,q_{2^k}(s)) \; | \; s \in \fb_2^{n-k} \},
\end{align}
with $q_1(s),  \dots ,q_{2^k}(s)$ being the deterministic functions in eqs. \eqref{eq: q_1(s) function}, \eqref{eq: q_i(s) function}.

The next Lemma \ref{lem: structured code SO} is the  binary linear code analogue of Proposition \ref{prop: exact SO proposition}, which said that for random codebooks, the \ac{SOGRAND} formula eq. \eqref{eq: SOGRAND formula introduction} equals eq. \eqref{eq: SO definition introduction}, the conditional probability of correctness, conditioned on the likelihood ranks $Q_1, \dots, Q_L$. In Lemma \ref{lem: syndrome HN^n determines Q_i} we saw that the noise syndrome $\mathbf{H}N^n$ already determines the full sequence $Q_1, \dots, Q_{L}$ of likelihood ranks. We will see that, for this reason, conditioning on the information available after $L=1$ decoding is equivalent to conditioning on the information available after finding all $L=2^k$ codewords. In particular, in the binary linear code case, evaluating the exact \ac{SO} requires, even after finding the first codeword, summing over the probability mass of all $2^k$ candidate noise sequences, or equivalently leveraging knowledge of the code structure to evaluate the syndrome probability $P(\mathbf{H}N^n =s)$.
\begin{lemma}\label{lem: structured code SO}
     Let $\mathcal{C}_n$ be a binary linear code with parity-check matrix $\mathbf{H}$, where $\rank \mathbf{H} = n-k$, and let $s \in \fb_2^{n-k}$ be a syndrome. Let $q_1 (s)<q_2 (s)< \dots <q_{2^k}(s)$ be the sequence of likelihood ranks determined by $s$ through Lemma \ref{lem: syndrome HN^n determines Q_i}. If $P(\mathbf{H}N^n = s)>0$,  then 
     \begin{align}
         &P(X^n =  \widehat{C}^{n,j} \;|\; Q_1 = q_1(s)) \label{eq: cond correctness L=1} \\
         &=  P(X^n =  \widehat{C}^{n,j} \;|\; Q_1 = q_1 (s), \dots, Q_L = q_L(s)) \label{eq: cond correctness L}\\
         &=  P(X^n =  \widehat{C}^{n,j} \;|\; Q_1 = q_1 (s), \dots, Q_{2^k} = q_{2^k}(s)) \label{eq: cond correctness L=2^k}\\
         &=  \frac{ p_{N^n}(z^{n,q_j(s)})}{P(\mathbf{H}N^n =s)} = \frac{ p_{N^n}(z^{n,q_j(s)})}{\sum_{i=1}^{2^k}p_{N^n}(z^{n,q_i (s)})}, \label{eq: SO fraction, binary linear code}
     \end{align}
     for any $L \in [2^k]$ and $j \in [L]$ and where $p_{N^n}$ is the probability mass function of $N^n$.
\end{lemma}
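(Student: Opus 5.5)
The plan is to reduce every conditioning event in the statement to the single event $\{\mathbf{H}N^n = s\}$, and then evaluate one conditional probability directly.

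First I show that $\{Q_1 = q_1(s)\} = \{\mathbf{H}N^n = s\}$. The inclusion ``$\supseteq$'' is immediate from Lemma \ref{lem: syndrome HN^n determines Q_i}, since $Q_1 = q_1(\mathbf{H}N^n)$. For ``$\subseteq$'', suppose $Q_1 = q_1(s)$. By the proof of Lemma \ref{lem: syndrome HN^n determines Q_i}, $Q_1$ is the minimum of $G$ over the coset $N^n + \mathcal{C}_n$, which is attained at the element $z^{n,Q_1}$ of that coset. By eq. \eqref{eq: coset N+ C_n syndrome characterization}, $\mathbf{H}z^{n,Q_1} = \mathbf{H}N^n$. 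Also, $z^{n,q_1(s)}$ is by definition an element with syndrome $s$, because the minimum in eq. \eqref{eq: q_1(s) function} is attained and $G$ is a bijection. Hence $\mathbf{H}N^n = \mathbf{H}z^{n,Q_1} = \mathbf{H}z^{n,q_1(s)} = s$.

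Next, the events $\{Q_1 = q_1(s), \dots, Q_L = q_L(s)\}$ are nested decreasing in $L$. Each contains $\{\mathbf{H}N^n = s\}$, because all $Q_i = q_i(\mathbf{H}N^n)$, and each is contained in $\{Q_1 = q_1(s)\} = \{\mathbf{H}N^n = s\}$. So all of them coincide with $\{\mathbf{H}N^n = s\}$. This gives the equalities \eqref{eq: cond correctness L=1}$=$\eqref{eq: cond correctness L}$=$\eqref{eq: cond correctness L=2^k} at once, and each conditioning event has positive probability by assumption.

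It remains to compute $P(X^n = \widehat{C}^{n,j} \mid \mathbf{H}N^n = s)$. As in the proof of Proposition \ref{prop: exact SO proposition}, $\{X^n = \widehat{C}^{n,j}\} = \{N^n = z^{n,Q_j}\}$. On $\{\mathbf{H}N^n = s\}$ we have $Q_j = q_j(s)$, so this event becomes $\{N^n = z^{n,q_j(s)}\}$. Moreover $\{N^n = z^{n,q_j(s)}\} \subseteq \{\mathbf{H}N^n = s\}$, because $z^{n,q_j(s)}$ lies in the syndrome-$s$ coset. Therefore the conditional probability equals $p_{N^n}(z^{n,q_j(s)})/P(\mathbf{H}N^n = s)$. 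Independence of $X^n$ is not even needed here, since the event is expressed purely in terms of $N^n$.

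Finally, I expand $P(\mathbf{H}N^n = s)$ as the sum of $p_{N^n}$ over the coset $\{z^n : \mathbf{H}z^n = s\}$. This coset has exactly $2^k$ elements, because it is a translate of $\ker \mathbf{H}$. By eq. \eqref{eq: q_i(s) function} and injectivity of $G$, its elements are precisely $z^{n,q_1(s)}, \dots, z^{n,q_{2^k}(s)}$, which gives the last equality.

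The only mildly delicate step is the reverse inclusion $\{Q_1 = q_1(s)\} \subseteq \{\mathbf{H}N^n = s\}$. The key observation is that the rank $q_1(s)$ identifies the sequence $z^{n,q_1(s)}$, and hence its syndrome, uniquely. So distinct syndromes necessarily produce distinct values of $q_1$.
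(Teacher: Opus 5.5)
Your proof is correct and follows essentially the same route as the paper. Like the paper, you show $\{Q_1 = q_1(s)\} = \{\mathbf{H}N^n = s\}$ via injectivity of $G$ and note that the syndrome determines every $Q_j$, so all conditioning events coincide; you then reduce $\{X^n = \widehat{C}^{n,j}\}$ to $\{N^n = z^{n,q_j(s)}\}$ and expand $P(\mathbf{H}N^n = s)$ over the $2^k$-element syndrome coset.
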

\begin{proof}
    We start by showing that the three conditional probabilities in eqs. \eqref{eq: cond correctness L=1}, \eqref{eq: cond correctness L} and \eqref{eq: cond correctness L=2^k} are equal. To do that, we first show the event equality
    \begin{align}\label{eq: event equality HN^n =s iff Q_1 = q_1(s)}
        \{ \mathbf{H}N^n = s \} = \{ Q_1 = q_1 (s)\}.
    \end{align}
    If $\mathbf{H}N^n = s$, then Lemma \ref{lem: syndrome HN^n determines Q_i} yields $Q_1= q_1(\mathbf{H}N^n) = q_1(s).$ To show the other inclusion, let $Q_1 = q_1(s).$ By definition, that means
    \begin{align}
       &\min \{ G(z^n) \; | \; z^n \in \fb_2^n \text{ s.t. } \mathbf{H}z^n = \mathbf{H}N^n \} \label{eq: set 1}\\
       =& \min \{ G(z^n) \; | \; z^n \in \fb_2^n \text{ s.t. } \mathbf{H}z^n = s \} \label{eq: set 2}.
    \end{align}
    As $G$ is a bijection, the respective minimizers of eqs. \eqref{eq: set 1} and \eqref{eq: set 2} are unique. Let $z^n_1$ be the minimizer of eq. \eqref{eq: set 1} and $z^n_2$ be the minimizer of eq. \eqref{eq: set 2}. Using again that $G$ is a bijection, it follows that
    \begin{align*}
        G(z_1^n) = G(z_2^n) \iff z^n_1 = z_2^n  &\implies \mathbf{H}z^n_1 = \mathbf{H}z^n_2 \\
        &\implies \mathbf{H}N^n  = s.
    \end{align*}
    This shows eq. \eqref{eq: event equality HN^n =s iff Q_1 = q_1(s)}. But by Lemma \ref{lem: syndrome HN^n determines Q_i}, the syndrome $\mathbf{H}N^n$ determines the full sequence $Q_1, \dots, Q_{2^k}$. So all three conditioning events in eqs. \eqref{eq: cond correctness L=1}, \eqref{eq: cond correctness L} and \eqref{eq: cond correctness L=2^k} are equal, and equal $  \{ \mathbf{H}N^n = s \}$. Therefore, the conditional probabilities in   eqs. \eqref{eq: cond correctness L=1}, \eqref{eq: cond correctness L} and \eqref{eq: cond correctness L=2^k} are  also equal.

    To conclude the proof, note that whenever $Q_1 = q_1(s),$
    \begin{align*}
        X^n =  \widehat{C}^{n,j} = Y^n - z^{n,q_j (s)} \iff N^n = z^{n,q_j (s)},
    \end{align*}
    so that with eq. \eqref{eq: event equality HN^n =s iff Q_1 = q_1(s)} it follows that 
    \begin{align}\label{eq: P(- | Q_1 = q_ ) = P(- | HN=s)}
        P(X^n =  \widehat{C}^{n,j} \;|\; Q_1 = q_1(s)) =  P( N^n = z^{n,q_j (s)} \;|\;  \mathbf{H}N^n = s).
    \end{align}
    By Lemma \ref{lem: syndrome HN^n determines Q_i},  it follows that $q_1(s), \dots, q_{2^k}(s)$ are the ranks of noise vectors $z^{n,q_1(s)}, \dots, z^{n,q_{2^k}(s)}$ all having syndrome $s$. At the same time eq. \eqref{eq: coset N+ C_n syndrome characterization} implies that there are $|\mathcal{C}_n|=2^k$ such vectors, so  $z^{n,q_1(s)}, \dots, z^{n,q_{2^k}(s)}$ are precisely the vectors in $\fb_2^n$ having syndrome $s$. From that and  eq. \eqref{eq: P(- | Q_1 = q_ ) = P(- | HN=s)}, equality holds throughout eqs. \eqref{eq: cond correctness L=1}, \eqref{eq: cond correctness L}, \eqref{eq: cond correctness L=2^k} and \eqref{eq: SO fraction, binary linear code}.
\end{proof}
In Figs. \ref{fig:SO comparison CRC} and \ref{fig:SO comparison eBCH} we evaluate for a CRC[17,5] code and an eBCH[16,5] code, respectively, both the exact conditional probability of correctness $S_1^{\text{EXACT}}(s) \coloneqq  P(X^n =  \widehat{C}^{n,1} \;|\; Q_1 = q_1 (s),  Q_2 = q_2(s))$ as specified by eq. \eqref{eq: SO fraction, binary linear code} in Lemma \ref{lem: structured code SO} and the \ac{SOGRAND} approximation $S_1^{\text{SOGRAND}}(s)  $ as given by eq. \eqref{eq: SO fraction 2} in Proposition \ref{prop: exact SO proposition}. We evaluate $S_1^{\text{EXACT}}(s)$ and  $S_1^{\text{SOGRAND}}(s)  $ for all possible $(q_1(s),q_2(s))$, $s \in \fb_2^{n-k}$, which by eq. \eqref{eq: all possible q_1, q_2 pairs} is only a subset of all possible pairs $(q_1,q_2) \in [2^n]^2$ with $q_1 < q_2$. Evaluating $S_1^{\text{EXACT}}(s)$ requires iterating through an exponentially large set, whereas evaluation of $S_1^{\text{SOGRAND}}(s)  $
requires only knowledge of $p_{N^n}$ and  $(q_1(s),q_2(s))$, which are byproducts of the decoded list. Nevertheless, Figs. \ref{fig:SO comparison CRC} and \ref{fig:SO comparison eBCH} illustrate that $S_1^{\text{SOGRAND}}$ approximates $S_1^{\text{EXACT}}(s)$ well.

In Figs. \ref{fig:SO comparison CRC} and \ref{fig:SO comparison eBCH}, and Lemma \ref{lem: structured code SO}, it can also be observed that the effect underlying the phenomenon discussed in Section \ref{sec: Shannon Theory}, that once the first likelihood rank is fixed, a late second rank increases confidence in the decoding, still occurs for binary linear codes.

\section{Conclusion}\label{sec: Conclusion}
In this paper we showed that the \ac{SOGRAND} soft-output formula \cite{yuanSOGRAND25}, which was originally described as an approximation, is exactly the conditional probability of a list entry being correct, given the likelihood ranks corresponding to the list. This holds for arbitrary ranking functions, including those that lead to approximations of \ac{ML} decoding. As such, it offers a new method for establishing results on the \ac{BLER}, cf. eq. \eqref{eq: BLER = 1- E[SO]}, particularly results conditioned on noise effects that identify codewords.

By studying the large-blocklength limit for \ac{ML} decoding, we identified a decision function $\Delta$, whose sign determines whether the conditional probability of the first list entry being correct converges to one or zero, and whose absolute value gives the exponential rate of convergence. This holds both above and below capacity.

The decision function depends only on the first, second and last likelihood ranks. For list size one, we recover the rank threshold $y^*$ of \cite{duffy_capacity-achieving_2019}, and show that for list size two, a second threshold $\bar{y}$ exists that separates the range of the first rank into three regions. Below $y^*$, the decoding concentrates on the correct codeword. In $(y^*, \bar{y})$, a late second rank recovers reliability, and above $\bar{y}$, the decoding  concentrates on a wrong codeword. We thus showed that a list decoder with at least two list entries contains strictly more reliability information than a single-codeword decoder. For the \ac{BSC}, we provided explicit expressions for $y^*$ and $\bar{y}$.

Using two distinct code structures as examples, we illustrated that for binary linear codes, the random codebook expression remains an accurate predictor of the exact conditional probability of correctness, and that the rescue phenomenon remains observable for structured codes.

 We performed a large-blocklength analysis in the hard-decision setting. However, the finite-blocklength results of Section \ref{sec: Soft Output exact} hold for an arbitrary ranking function and are therefore true in both the hard- and soft-decision settings. In a soft-decision setting, any generalization of the decoder we studied here, cf. Definition \ref{defi: decoder}, that implements \ac{ML} decoding \cite{solomon20} or approximations of \ac{ML} decoding \cite{duffy2022_ordered} requires a ranking function that depends on the soft information.
 An \ac{LDP} for guesswork with soft information is available \cite{LI_2019conditionalguesswork}, but its rate function averages over the realizations of the soft information and is thus not suitable for characterizing the reliability of a decoding conditioned on a fixed realization of the soft information. 
 
 One way forward is to consider the recently introduced Linear Reliability Channel \cite{Mariona2026}, which models soft information using discrete objects. There, an \ac{LDP} is established that holds also conditioned on a fixed realization of the soft information. Due to symmetries in the channel model, the resulting rate function is the same for every realization, which lends itself to the style of analysis presented here.

\bibliographystyle{IEEEtran}
\bibliography{references}

\end{document}